\def\PROG#1{$\mathcal{#1}$}
\begin{document}

\title{Consensus on an Unknown Torus with \\
Dense Byzantine Faults}
\author{Joseph Oglio \and Kendric Hood \and Gokarna Sharma \and Mikhail Nesterenko} 
\institute{Department of Computer Science, Kent State University, Kent, OH 44242, USA\\
\email{\{joglio@,khood5@,sharma@cs.,mikhail@cs.\}@kent.edu}}

\maketitle
\thispagestyle{plain}
\pagestyle{plain}

\begin{abstract}
\vspace{-.5cm}
We present a solution to consensus on a torus with Byzantine faults. Any solution to classic consensus that is tolerant to $f$ Byzantine faults requires $2f+1$ node-disjoint paths. Due to limited torus connectivity, this bound necessitates spatial separation between faults. Our solution does not require this many disjoint paths and tolerates dense faults. 

\quad\quad Specifically, we consider the case where all faults are in one column. 
We address the version of consensus where only processes in fault-free columns must agree.  We prove that even this weaker version is not solvable if the column may be completely faulty. 
 We then present a solution for the case where at least one row is fault-free. 
The correct processes share orientation but do not know the identities of other processes or the torus dimensions. The communication is synchronous.

\quad\quad To achieve our solution, we build and prove correct an all-to-all broadcast algorithm \PROG{BAT} that guarantees delivery to all processes in fault-free columns.
We use this algorithm to solve our weak consensus problem. Our solution, \PROG{CBAT}, runs in $O(H+W)$ rounds, where $H$ and $W$ are torus height and width respectively.
We extend our consensus solution to the fixed message size model where it runs in $O(H^3W^2)$ rounds. Our results are immediately applicable if the faults are located in a single row, rather than a column.
\end{abstract}


\section{Introduction}

A Byzantine process~\cite{lamport1982byzantine,pease1980reaching}  may arbitrarily deviate from the prescribed algorithm. This is the strongest fault that can affect a process in a distributed system. The fault is powerful enough to straddle the realm of fault tolerance and security as it may model either a device failure or a malicious intruder. 

In the presence of Byzantine faults, the common task for correct processes is to come to an agreement or consensus. The power of the faults may be abridged with cryptography~\cite{abraham2017brief,dolev1983authenticated,katz2009expected} or randomization~\cite{ben1983another,feldman1997optimal,rabin1983randomized}. If neither primitive is available, the solutions require that the number of correct processes is large enough to overwhelm the faulty ones. 

If the topology of a network is considered, the consensus problem is further complicated as faulty processes, even small in absolute number may isolate some correct processes and prevent them from achieving consensus. In a general topology, it is known that consensus is solvable only if the network is at least $2f+1$-connected~\cite{dolev1982byzantine,fischer1986easy,lamport1982byzantine}. However, such connectivity demands a dense network with large node degree which limits the scalability of a solution achieved this way.

In this paper, we study Byzantine-robust consensus on a torus. Its fixed degree and small diameter makes torus an attractive architecture for distributed computing and storage tasks~\cite{beame1989distributed,ganesan2004one}. 
Torus connectivity is $4$. Hence, according to classic connectivity bounds, it may not tolerate more than $1$ fault. To increase tolerance, the fault locations must be restricted. One approach is to make the faults sparse.  That is, the faulty processes need to be positioned far enough apart so that $2f+1$ connectivity is preserved~\cite{maurer2014byzantine}.  Such a solution fails if the faults are located close to each other. In this paper, we address dense Byzantine faults on a torus of unknown dimensions. 
However, to achieve tolerance, we restrict fault location to a single column.

\vspace{2mm}
\noindent\textbf{Related work.} 
%
There are a number of solutions optimizing consensus in incomplete topologies~\cite{alchieri2008byzantine,chlebus2020fast,nesterenko2006discovering,tseng2015fault,winkler2019consensus}.
Chlebus et al.~\cite{chlebus2020fast} optimize the speed of achieving Byzantine consensus in arbitrary topologies. The connectivity is subject to $2f+1$ bound. 
Alchieri et al.~\cite{alchieri2008byzantine} study synchronous Byzantine consensus with unknown participants. They use participant detectors to establish network membership.
Tseng and Vaidya~\cite{tseng2015fault} explore consensus in directed graphs.
Winkler et al.~\cite{winkler2019consensus} study consensus in directed dynamic networks. 
 Oglio et al.~\cite{oglio2021byzantine} solve Byzantine consensus in Euclidean space.

Potentially, network topology may be discovered despite Byzantine faults. This simplifies consensus solution.
Nesterenko and Tixeuil present two generic Byzantine-tolerant topology discovery
algorithms~\cite{nesterenko2006discovering}. However, neither algorithm is suitable for our task: their first algorithm raises an alarm once an inconsistency is discovered without completing the task. Their second algorithm requires $2f+1$ connectivity.

Let us discuss consensus on toruses and related topologies. 
Several papers~\cite{bhandari2005reliable,koo2004broadcast,pelc2005broadcasting} consider the problem of Byzantine-tolerant reliable broadcast on an infinite grid or a torus in radio networks. In such a network, all processes within a particular distance from the sender receive the message simultaneously. Due to this limitation, their results are not immediately applicable to our model. 

Kandlur and Shin~\cite{kandlur1991reliable} consider a synchronous Byzantine-tolerant broadcast on a torus. In their approach, each message is delivered over a fixed number of node-disjoint paths. The correct message is recovered so long as the majority of paths bypass faulty nodes. Since torus connectivity is $4$, this approach may tolerate at most one fault. Maurer and Tixeuil~\cite{maurer2014byzantine} study a Byzantine-tolerant broadcast on a torus. In their model, the byzantine torus dimensions are not known. Their solution assumes that the faulty nodes are sparse, i.e.  they are located far enough apart such that there is sufficiently many node disjoint paths between the sender and any of the receivers to ensure that the influence of the faulty nodes is countered.

Thus, to the best of our knowledge, previous work has not addressed dense Byzantine faults on a torus. 

\vspace{2mm}
\noindent
\textbf{Paper contribution and approach.} We consider the synchronous model, bidirectional communication, no cryptographic primitives or randomization. The topology we study is a torus whose dimensions: height $H$ and width $W$ are unknown to the processes. All processes share orientation. We assume that all faults are located in a single column. 

We consider the weaker consensus where only processes in fault-free columns must agree on a value. We prove that even this version is not solvable if a single column is completely faulty. 

We examine the case where at least one row is fault-free. To counter faulty column influence, we assume that $W \geq 5$. To solve this weak consensus problem, we first present an all-to-all broadcast algorithm~\PROG{BAT} that guarantees delivery to fault-free columns. We prove it correct and show that it runs in $O(H+W)$ rounds in the LOCAL model~\cite{pease1980reaching}
where messages are of arbitrary size. We then use \PROG{BAT} to build the consensus algorithm \PROG{CBAT}. We prove it correct and show that it runs in $O(H+W)$ rounds in LOCAL.

We then extend \PROG{BAT} and \PROG{CBAT} to use fixed-size messages and estimate their running time in the CONGEST model~\cite{peleg2000distributed}. We determine that the fixed-size message \PROG{BAT} runs in $O(H^2W)$ rounds and \PROG{CBAT} runs in $O(H^3W^2)$ rounds. Our results are immediately applicable to the case of faults located in a single row rather than column. 

Let us now introduce our solution approach. To counter the faults, we leverage the
processes' shared knowledge of the network topology. The data is first propagated along the column and the influence of possibly densely packed column of faults is neutralized. The single correct process is guaranteed to relay data across the faulty column. This horizontally propagated data is then disseminated through the fault-free columns to reach all correct processes there.

\vspace{2mm}
\noindent
\textbf{Paper organization.} In Section~\ref{secNotation}, we introduce our notation, state the problem and prove the necessity of a fault-free row. In Section~\ref{secBAT}, we present and prove correct our all-to-all broadcast algorithm \PROG{BAT}. In Section~\ref{secCBAT}, we use \PROG{BAT} to construct and prove correct our consensus algorithm \PROG{CBAT}. In Section~\ref{secFixed}, we describe how the algorithms can be modified for fixed message size and estimate their run time. We conclude the paper by describing further research directions in Section~\ref{secEnd}.

\section{Notation, Problem Statement, Fault Constraints}\label{secNotation}

\noindent
\textbf{Computation model.}
A process $p$ contains variables and actions. We denote variable $var$ of process $p$ as $var.p$. If process $p$ maintains variable $var$ about process $q$, we denote it as $var.q.p$.
A rectangular grid graph of processes is a Cartesian product of two chain graphs. Such a grid graph is embedded on a plane as a matrix of rows and columns. A \emph{torus grid graph}, or \emph{torus} for short, is formed by starting with a grid graph and connecting corresponding leftmost/rightmost and top/bottom processes with edges. 

Every process in the torus has a unique identifier. An adjacent process is a \emph{neighbor}.
In a torus, every process has exactly four neighbors. Each process knows the identifiers of its neighbors: $\mathit{left}$, $right$, $up$ and $down$. All processes share the orientation. That is, for any two processes $p$ and $q$ if $up.p = q$ then $down.q=p$ and if $\mathit{left}.p = q$ then $right.q = p$. We refer to the shared orientation as North, East, West and South.  The torus dimensions are unknown to the processes. That is, they do not know the height $H$ or the width $W$ of the torus.

The system is completely synchronous. The operation proceeds in rounds. In every round, each process receives all pending messages sent to it, does local calculations and sends messages to its neighbors to be received in the next round.  In one round, a process may thus receive multiple messages from the same or from different neighbors and then send multiple messages to neighbors. A \emph{computation} is an infinite sequence of such rounds.

For most of the paper, we assume that size of the message is arbitrary. We lift this assumption later in the paper.

\vspace{2mm}
\noindent\textbf{Process faults.}
Processes are either correct or faulty. A \emph{correct} process follows the algorithm while a \emph{faulty} process behaves arbitrarily.
A faulty process is always \emph{black}. A correct process is \emph{grey} if it has at most one black process in its column, it is \emph{white} otherwise. The colors of individual processes are applied to the rows and columns of the torus: grey-white, black-white, etc. Refer to Figure~\ref{figBatMessages} for torus depiction and fault location.



\vspace{2mm}
\noindent
\textbf{Broadcast.} Consider the problem where each process $p$ is input an arbitrary initial value $\mathit{initVal}.p$, and 
$p$ must share this value with every other process $q$ so that $val.q.p = \mathit{initVal}.p$ 

\begin{definition}
In the \emph{Weak Synchronous All-to-All Broadcast Problem} every white process $p$ must stop and for each white process $q$, $p$ must output $val.q.p$ such that $val.q.p = \mathit{initVal}.q$. 
\end{definition}

\noindent
\textbf{Consensus.} In \emph{Binary Consensus}, the input value $\mathit{initVal}.p$ is restricted to either $0$ or $1$. Each process must output an irrevocable decision $v$ with the following three properties: 
\begin{itemize}
    \item 
\emph{agreement} -- no two correct processes decide differently; 
\item \emph{validity} -- if there are no faults and for every process $p$, $\mathit{initVal}.p = v$, then $p$ decides $v$; 
\item \emph{termination} -- every correct process eventually decides. 
\end{itemize}
\begin{definition}
In \emph{strong consensus}, the above properties apply to every grey and white process, i.e. to each correct process. In \emph{weak consensus}, the properties apply only to white processes. 
\end{definition}

\noindent
\textbf{Impossibility.} Let us outline the area of the possible. Strong consensus requires $2f+1$ connectivity~\cite{dolev1982byzantine}. The connectivity of torus is $4$, so there is no algorithm that solves strong consensus on a torus with $f > 1$ and arbitrary fault location. 

If faults are restricted to a single column but may occupy the complete column, consensus is still impossible. This holds true even if the processes know the torus dimensions. Intuitively, even if correct processes are connected, they may not be able to distinguish their faulty and non-faulty neighbors and agree on a value. The below theorem formalizes this observation.

\begin{theorem}
There is no algorithm that solves consensus, weak or strong, on a torus with a faulty column.
\end{theorem}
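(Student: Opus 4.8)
I would argue by contradiction with an indistinguishability (scenario) argument, in the spirit of the classical impossibility proofs for Byzantine agreement~\cite{fischer1986easy,dolev1982byzantine}. First I would reduce to the weak case: any algorithm solving strong consensus also solves weak consensus (the white processes are a subset of the correct ones), so ruling out weak consensus is the stronger statement and yields the strong case for free. So suppose some algorithm $\mathcal{A}$ solves weak consensus on every torus whose faults lie in a single, possibly complete, column, and fix a torus in which column $0$ is entirely faulty. Under the weak requirement the processes that must agree are those in the fault-free columns $1,\dots,W-1$; note these form a \emph{connected} cylinder, so the faulty column is not a vertex cut and a naive partition argument à la Dolev does not apply. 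I would then anchor two reference decisions using validity. Since a Byzantine process may behave exactly like a correct one, let $E_0$ be the execution in which every input is $0$ and the column-$0$ processes behave precisely as correct input-$0$ processes; $E_0$ is indistinguishable to every correct process from the fully fault-free all-$0$ torus, so validity forces decision $0$. Symmetrically, an execution $E_1$ (all inputs $1$, faulty column mimicking correct input-$1$ processes) forces decision $1$.

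The goal is then to connect $E_0$ to $E_1$ by a chain of admissible (faulty-column) executions in which consecutive members are indistinguishable to at least one correct process. Within any single execution agreement forces one common decision, and a shared indistinguishable process pins the decisions of consecutive executions to be equal; such a chain would therefore propagate the $0$ of $E_0$ all the way to the $1$ of $E_1$, the contradiction. The engine of the chain is that the faulty column abuts the correct cylinder on both its West face (column $1$) and its East face (column $W-1$), and it may drive these two faces independently. Exploiting that correct processes cannot authenticate the states of non-neighbouring processes, I would have the faulty column consistently \emph{emulate} a fault-free global execution on its two faces, gradually reassigning inputs so that each successive execution differs from the previous one only in a region screened by the column. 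This fabrication concerns the values and messages relayed through the faulty column rather than the torus shape, so the argument is intended to survive even when $H$ and $W$ are known to the processes.

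The hard part, and the step I expect to be the main obstacle, is precisely making these emulations globally \emph{consistent} despite the connectivity of the fault-free region. Because the cylinder is connected and the model is synchronous with no a priori round bound, one cannot simply feed two independent worlds to the two faces: a value injected at one face propagates through the cylinder and is eventually observed at the other, so the two faces must be parts of one coherent emulated execution. The technical heart is therefore to design the emulation and the intermediate input assignments, in the manner of the hexagon construction of~\cite{fischer1986easy}, so that (i) the difference between any two consecutive executions is confined to the emulated portion or masked by the column, hence invisible to some correct process, and (ii) the emulated fault-free reference execution is legitimate enough that validity genuinely anchors the two endpoints at $0$ and $1$. Verifying that such a consistent chain exists on the torus geometry is where the real work lies.
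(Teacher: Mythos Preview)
Your approach is substantially different from the paper's, and the gap you yourself flag is real and unfilled. The paper does not build an indistinguishability chain at all. It exploits the fact that an algorithm for ``a torus'' must work for \emph{every} torus (dimensions are unknown): specialize to height $H=1$, so the torus degenerates to a ring of width $W$, a ``faulty column'' is a single Byzantine node, and the Dolev $2f{+}1$ connectivity bound (ring connectivity $2<3$) finishes the argument in one line. That reduction is the entire proof.

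Your proposal instead fixes a torus and tries to run a Fischer--Lynch--Merritt style scenario chain from $E_0$ to $E_1$, aiming for the stronger statement that impossibility holds even with known dimensions. You correctly diagnose the obstacle: for $H\ge 2$ the fault-free region is a connected cylinder, the faulty column is not a vertex cut, and anything the column fabricates on one face propagates through the cylinder and must be coherent with what it shows the other face. You then leave this as ``where the real work lies''---but that step is exactly what would make the argument succeed or fail, and you give no construction. So as written the proposal is incomplete. Since the paper's $H=1$ reduction sidesteps the connected-cylinder difficulty entirely, you do not need to resolve it to prove the theorem as stated; if you want the stronger known-dimensions claim for all $H$, that is a different (and harder) result than what the paper proves here.
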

\begin{proof}
Assume the opposite: there is such an algorithm \PROG{A} hat solves weak consensus on a torus with a completely faulty column.  Since there are no grey processes, the requirements of weak and strong consensuses are identical. Consider a torus $T$ of height $1$ and some width $W$. Since the algorithm \PROG{A} is a solution the consensus problem on a torus, it should be able to solve it on $T$. However, this topology is a ring. Its connectivity is $2$. According to Dolev et al.~\cite{dolev1982byzantine}, the consensus is solvable only if the connectivity of the network is at least $2f+1 = 3$. That is, contrary to our initial assumption, \PROG{A} may not solve consensus on $T$. Hence the theorem. 
\qed
\end{proof}

\section{\PROG{BAT}: Byzantine-Tolerant Broadcast to All-to-All on a Torus}\label{secBAT}

\begin{figure}[!tb]
    \centering
    \includegraphics[width=0.50\columnwidth]{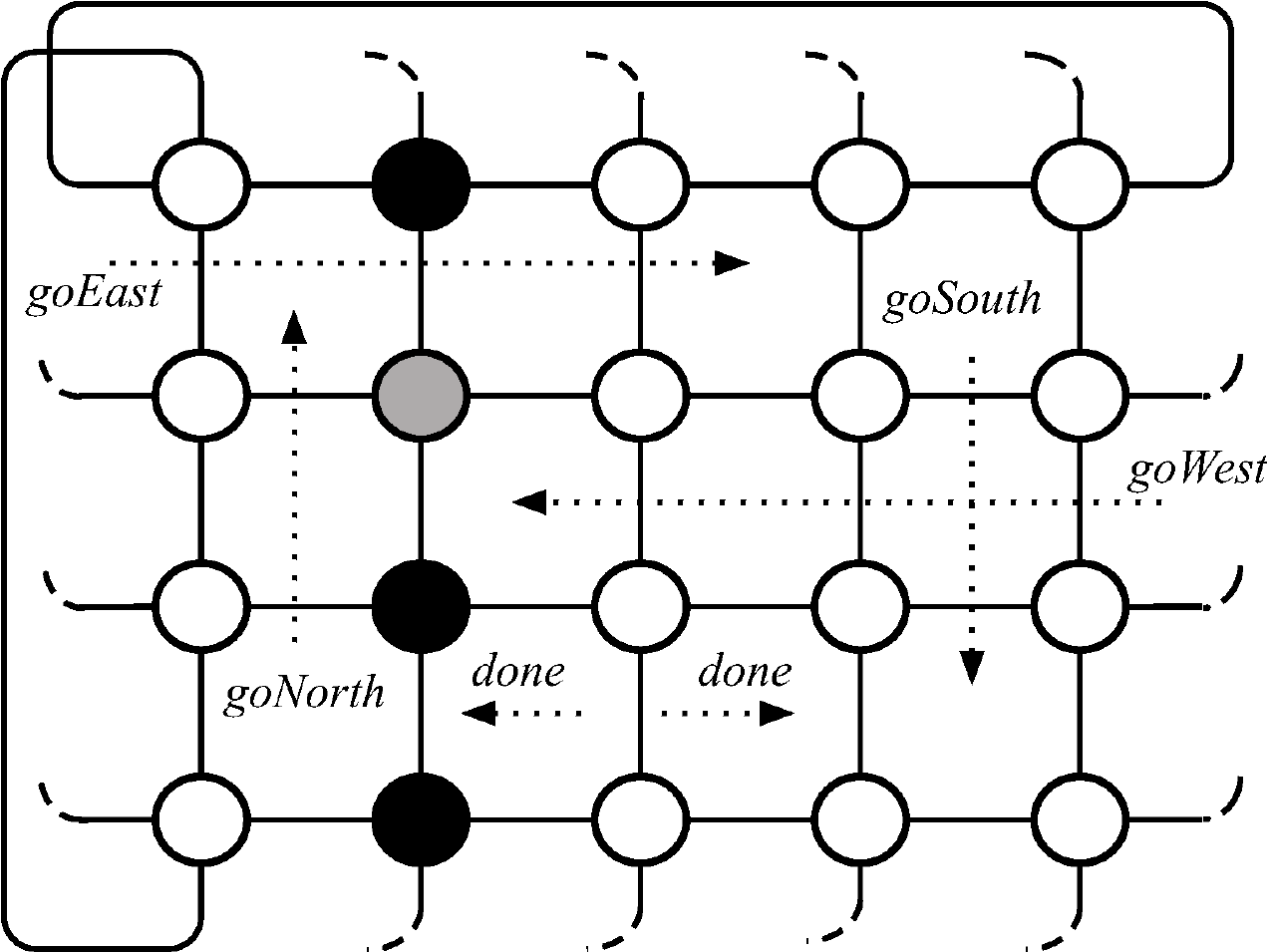}
    \caption{Torus orientation, fault location, message types and message propagation direction in \PROG{BAT}. }
    \label{figBatMessages}
    \vspace{3mm}
\end{figure}

\begin{algorithm}[htbp]
\DontPrintSemicolon
\scriptsize
\SetKwData{receive}{receive}
\SetKwData{send}{send}
\SetKwData{output}{output}
\SetKwData{add}{add}\SetKwData{to}{to}\SetKwData{compute}{compute}
\SetKwData{discard}{discard}
\SetKwData{from}{from}
\SetKwData{deliver}{deliver}
\SetKwData{true}{true}\SetKwData{false}{false}
\SetKwData{bool}{boolean}\SetKwData{int}{integer}

\SetKwFor{Receive}{\receive}{$\longrightarrow$}{}

\caption{\PROG{BAT}: Byzantine All-to-All broadcast on a Torus}\label{algBAT}

\KwIn{$\mathit{initVal}$}

\textbf{Constants:} \\
$p$ \tcp*[f]{process identifier} \\
$up, down, \mathit{left}, right$ \tcp*[f]{neighbor identifiers} \\
\vspace{1mm}
\textbf{Variables:} \\
$northDone$, initially \textbf{false}\\ 
$column$, initially $\langle \rangle$ \tcp*[f]{seq. of value-id pairs received from down neighbor} \\
$\mathit{rowLeft}, rowRight$, initially $\langle \rangle$ \tcp*[f]{columns from resp. left and right neighbors} \\
$matrix$, initially $\langle \rangle$ \tcp*[f]{results matrix to output} \\
$eastWestDone $, initially \textbf{false} \tcp*[f]{one of horizontal neighbors decided}\\
\vspace{1mm}
\textbf{Phases:} \\
\SetKwProg{North}{North}{:}{}
\North(){}{
\textbf{Initial action:}\\
\add $\langle \mathit{initVal},p \rangle$ \to $column$ \tcp*[f]{initiate North Phase} \\
\send $goNorth(\mathit{initVal},p)$ \to $up$  \\
\label{lineInitAlg}
\vspace{1mm}
\textbf{Receive action:}\\
\label{LineSendNorth}

\Receive{$goNorth(v,id)$ \from $down$}{
\If(\tcp*[f]{has not started East-West Phase})
       {$\emph{not}\ northDone$}{
       \eIf{ $id \neq p$ }{ 
            \add $\langle v,id \rangle$ \to $column$ \\
            \send $goNorth(v, id)$ \to $up$
        }
        (\tcp*[f]{$p$ receives its own value back}){
    
            $northDone \leftarrow \textbf{true}$ \\
             \send $goEast(column, \mathit{left}, p, right)$ \to $right$ \\ 
\label{LineStartEastWest}
            \send $goWest(column, \mathit{left}, p, right)$ \to $\mathit{left}$ \\
            }
        }
    }
}

\SetKwProg{East}{East-West, Receive actions}{:}{}
\East(){}{
\Receive{$goEast(c, l, id, r)$ \from $\mathit{left}$}{
       \eIf{$id \neq p$}{ 
         \add $\langle c, l, id, r \rangle$ \to head of $\mathit{rowLeft}$ \\
          \send $goEast(c, l, id, r)$ \to $right$ \\
        }{        
            $m \leftarrow \textbf{match}(\langle column, \mathit{left}, p, right \rangle + \mathit{rowLeft},$ \\
           \hspace{1.6cm} $\langle column, \mathit{left}, p, right \rangle + rowRight$)  \\
\label{LineMatch}
              \If{$matrix = \langle\rangle$ \emph{and} $m \neq \langle\rangle $}{
                $ matrix \leftarrow m$ \\
                 \textbf{Output:} $matrix$ \\
                \send $goSouth(matrix, p)$ \to $down$ \tcp*[f] { start South Phase}
\label{LineStartSouth}\\
 \send $done$ \to $right$ and $\mathit{left}$ \tcp*[f]{start Decision Phase} 

     \lIf
                { $eastWestDone$} {
                \textbf{stop}
                }   
        }
            }
    }
\Receive{$goWest(c, l, id, r)$ \from $right$}{  
    \tcp{handle similar to $goEast$, add $\langle c, l, id, r\rangle$ to tail of $rowRight$ 
    }
 }
}

\SetKwProg{South}{South, Receive action}{:}{}
\South(){}{
\Receive{$goSouth(m,id)$ \from $up$}{
        \If{$id \neq p$}{
           \send $goSouth(m, id)$ \to $down$ \\
           \If(\tcp*[f]{South Phase did not reach $p$} yet)
           { $matrix = \langle\rangle$} {
\label{LineFixMatrix}
                $matrix \leftarrow m$ \\
                 \textbf{Output:} $matrix$ \\
                  \send $done$ \to $right$ and $\mathit{left}$ \tcp*[f]{start Decision Phase}
\label{LineStartDecision}\\
            \lIf
                { $eastWestDone$} {
                \textbf{stop}
                }
            }
        }
    }
}

\SetKwProg{Decision}{Decision, Receive action}{:}{}
\Decision(){}{
\Receive{$done$ \from $direction$}{
        \If{ $\emph{not}\ eastWestDone$ } {
            $eastWestDone \leftarrow \textbf{true}$ \\
\label{LineStop}
            \lIf
                { $matrix \neq \langle\rangle$} {
                \textbf{stop}
            }
        }
    }
}\label{endBat}
\end{algorithm}

\begin{algorithm}[htb]
\setcounter{AlgoLine}{53} 
\scriptsize
\caption{\PROG{BAT} functions}\label{algBATfunctions}
\SetKwData{true}{true}\SetKwData{false}{false}
\textbf{Functions:} \\
\SetKwProg{match}{match($\mathit{rleft}, rright$)}{:}{}

\match(){}{
  $\mathit{cleft} \leftarrow \textbf{consistent}(\mathit{rleft})$ \\
  $cright \leftarrow \textbf{consistent}(rright)$ \\
  \If { $\mathit{cleft} \neq \langle\rangle$ and $cright \neq \langle\rangle $}{
    \tcp{$\mathit{cleft} = cright$}
      \If{\em $|\mathit{cleft}| = |cright| $ and \\
          $(\forall i: 1 \leq i < |\mathit{cleft}|: $ \\
            \ \ $s(i) \equiv \langle lci, lli, lidi, lri \rangle \in \mathit{cleft}$ \\
            \ \  $s(i) \equiv \langle rci, rli, ridi, rri \rangle \in cright$ : \\
            \ \  $lci = rci$ and $lli = rli$ and $lidi = ridi$ and $lri = rri$)}
      {
        \tcp{ return columns of $\mathit{cleft}$}
        \textbf{return} $\langle \forall i:  1 < i \leq |\mathit{cleft}| :
                s(i) \equiv \langle ci, li, idi, ri\rangle \in \mathit{cleft} : ci \rangle$
      }
   }
   \textbf{return} $\langle \rangle$
}

\vspace{2mm}
\SetKwProg{consistent}{consistent($clmn$)}{:}{}

\consistent(){}{

\eIf{\em$s \equiv \langle \cdot, \cdot, id,\cdot \rangle $ is unique in $clmn$ and \\
  \quad exists at most one $i$ : $1 \leq i < |clmn|: cminus = clmn \setminus s(i)$ and  \\
  \quad exists at most one $j$ : $1 \leq j \leq |clmn|: $ \\
\label{LineInsertGrey}
  \quad\quad\quad$cplus = cminus $ insert $\langle \bot, l, id, r \rangle$ at position $j$ in $cminus$ \\
 \quad $\forall i: 1 \leq i <  |cplus|, j = (i+1$ mod $|cplus|):  $ \\
 \quad\quad\quad $s(i) \equiv \langle ci, li, idi, ri \rangle \in cplus$ \\
 \quad\quad\quad $s(j) \equiv \langle cj, lj, idj, rj \rangle \in cplus :$\\
 \quad\quad\quad $idi = lj$ and $ri = idj$
   }{
    \textbf{return} $cplus$
  }{
   \textbf{return} $\langle\rangle$
    }
}
\label{endBATfuncs}
\end{algorithm}

\setlength{\textfloatsep}{20pt}

\noindent
\textbf{Overview.} We present algorithm, we call \PROG{BAT}, that solves the Weak Synchronous All-to-All Broadcast Problem. The code for \PROG{BAT} is shown in Algorithm~\ref{algBAT}. The functions used in \PROG{BAT}  are in Algorithm~\ref{algBATfunctions}. 
Our notation is loosely based on UNITY programming language~\cite{unity}.  
Each process starts \PROG{BAT} by sending initial messages in Line~\ref{lineInitAlg}. The execution of the rest of the actions are \emph{receive actions}. Such an action is  guarded by the corresponding message receipt. It is executed only when this message is sent to the receive process. The actions of the algorithm are grouped into phases: North, East-West, South and Decision. The algorithm is designed such that white processes synchronize their transitions through these phases.

Once done, each process sends the collected column data to its $\mathit{left}$ and $right$ neighbors in the East-West Phase. The data is sent in both directions for verification to counteract the actions of the grey and black processes in the row. If the data received from both directions match, each process starts the South Phase where the confirmed data is sent to $down$ neighbor. This data is a matrix of values from all processes in the torus.

Due to the actions of the black processes in the black-white rows, the white processes may receive corrupted values and fail to complete the South Phase on time. Once the data starts propagating down the column in the South Phase, the white processes synchronize, receive correct matrix and output it. 
After output, processes enter the Decision Phase. This phase ensures termination of white processes.

\vspace{2mm}
\noindent
\textbf{\PROG{BAT} details.} 
The input for each process in the algorithm is an arbitrary value $\mathit{initVal}$.  Each process $p$ in \PROG{BAT} has its own identifier as well as the ids of its up, down, left and right neighbors.

Each process $p$ maintains the following variables. The process has $column$, where it gathers pairs $\langle v, id\rangle$ of values and identifiers of its column in the North Phase. Boolean $doneNorth$ signifies whethe $p$ completed its North Phase. After it is done, the column values are exchanged across the row in two directions: left and right in the East-West Phase. These column values are collected in $\mathit{rowLeft}$ and $rowRight$. Once the values are matched across the row, they are propagated through the column in the South Phase and collected in $matrix$.
Variable $doneNeighbor$ records the neighbor identifiers that made the decision to ensure proper termination.

Let us discuss \PROG{BAT} operation during each phase in detail. Refer to Figure~\ref{figBatMessages} for the messages that processes send and their propagation directions.
All non-black processes simultaneously initiate the North Phase by sending their input values to the $up$ neighbors in a $goNorth$ message (see Line~\ref{LineSendNorth}). Then, each process $p$ starts collecting the values of its column processes by receiving $goNorth$ from the $down$ neighbor and propagating it upward. This continues until $p$ receives its own message back. Once received, $p$ initiates the East-West Phase (Line~\ref{LineStartEastWest}).

For that, $p$ sends the collected column to its $\mathit{left}$ and $right$ neighbor in messages $\mathit{goWest}$ and $goEast$, respectively. Together with the column, $p$ sends its own identifier and the identifiers of $\mathit{left}$ and $right$. This allows the recipients to reconstruct the sequence of the processes in the row if one of the entries is missing. Flag $northDone$ ensures that each process sends $\mathit{goWest}$ and $goEast$ at most once.

Due to the actions of the black processes in its column, even if some grey process completes its own North Phase, it may do so out of sync with the White processes.  However, the receipt of $goEast$ or $\mathit{goWest}$ from its neighbor, enables $g$'s corresponding receive action of the East-West phase. Thus, even though $g$ itself is out of sync, it relays these messages to the white processes in its row allowing them to proceed with the execution the East-West Phase. 

\begin{figure}[!t]
    \centering
    \includegraphics[width=0.56\columnwidth]{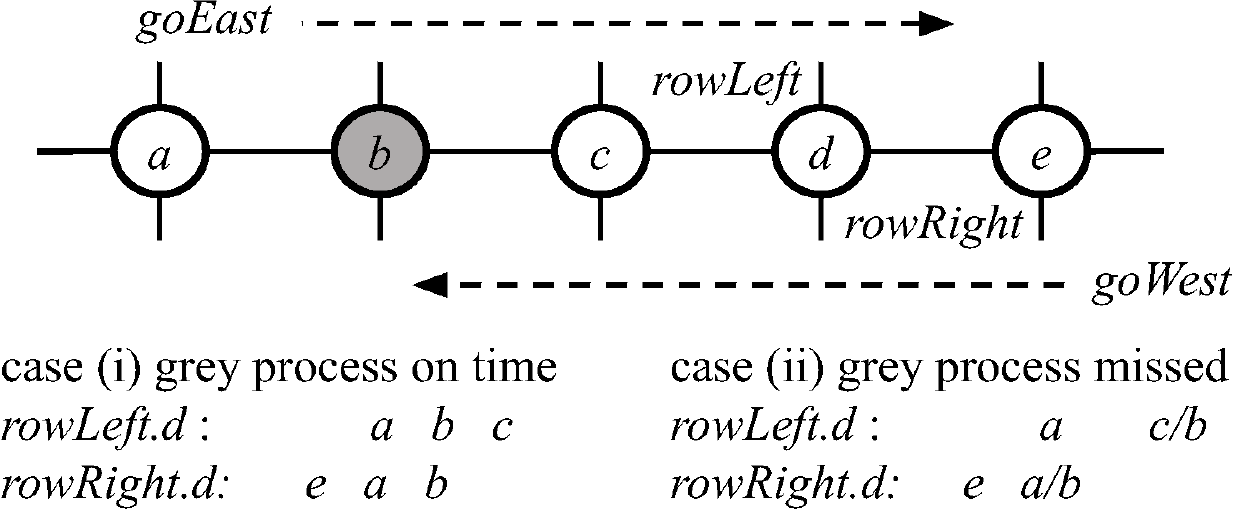}
    \caption{Matrix accumulation during the East-West Phase of \PROG{BAT}. The matrix is accumulated at process $d$ as it collects messages $\mathit{goEast}$ and $\mathit{goWest}$ from its left and right neighbors respectively. In case (i), the grey process $b$ starts East West Phase together with white processes. 
    In case (ii), $b$ starts it one round earlier.}
    \label{figBatRow}
    \vspace{-2mm}
\end{figure}

The East-West Phase is the most complicated part of the algorithm. Let us illustrate its operation with an example. Please see Figure~\ref{figBatRow}. Assume a torus row contains processes $a$, $b$, $c$, $d$ and $e$. All of them are white, except for $b$ which is grey. Each white process completes the North Phase in the same round and starts the East-West Phase by sending the collected column to its right neighbor in a $\mathit{goEast}$ message and to its left neighbor in $\mathit{goWest}$ message. The grey process may (i) complete its North Phase together with the white processes or (ii) miss it and either complete it in a different round or not at all. 

Let us first describe case (i). Once a process receives either $\mathit{goEast}$ or $\mathit{goWest}$, the process records what the message carries and sends it further in the same direction. Thus, in the first round of the East-West Phase, $d$ receives messages from $c$ and $e$, in the second $b$ and $a$ and so on.  As $\mathit{goEast}$ message arrives to $d$ from its left neighbor, $d$ inserts its contents to the head of $\mathit{rowLeft}.d$.

When $\mathit{goWest}$ message arrives to $d$ from the right, $d$ adds its contents to the tail of $\mathit{rowRight}.d$. Figure~\ref{figBatRow} shows the contents $\mathit{rowLeft}.d$ and $\mathit{rowRight}.d$ after three rounds of the East-West Phase when $d$ received messages from $c$, $b,$ and $a$ from the left and $e$, $a$, and $b$ from the right. 
The East-West Phase proceeds until the process receives a message from itself. In the example in Figure~\ref{figBatRow}, this happens in two more rounds, after $d$ receives messages from $e$ then $d$ from the left and $c$ then, possibly, $d$ from the right. 

In case (ii), the grey process $b$ completes the North Phase in a round different from the white processes. In Figure~\ref{figBatRow}, $b$ completes its North Phase one round earlier. Therefore, this message reaches $d$ together with a message from $a$ from the right, and together with message from $c$ from the left.  

After the process receives the message from itself, either from the right or the left,  it compares the contents of $\mathit{rowLeft}$ and $\mathit{rowRight}$ by invoking function  \textbf{match()} (Line~\ref{LineMatch}).
The operation of this function is somewhat complex since grey process may be out of synchrony with the rest of its row. If the contents of the two variables do not match, Function \textbf{match()} returns the data stripped of a mismatched column of the grey process. The resultant list of columns: $matrix$ is the output of \PROG{BAT}. This $matrix$ is sent to the $down$ process in a $goSouth$ message. This begins the South Phase  (Line~\ref{LineStartSouth}).

\PROG{BAT} operation is such that all white processes either initiate the South Phase in the same round or do not initiate the South Phase at all. Indeed, due to the actions of the black processes in the black-white rows, the white processes may not get the matching data and fail to start their own South Phase.  However, the white processes in the grey-white row are guaranteed to start the South Phase. The South Phase started by these processes, re-synchronizes white processes and propagates the correct $matrix$ information in $goSouth$ message.  Once a process receives the missing information, it outputs it (Line~\ref{LineFixMatrix}).

Yet, the processes should not terminate. Indeed, black processes may force a grey process in the black-grey column to receive a $goSouth$ message at any time. The grey processes may output the result and consider its mission accomplished. If a grey process terminates, its halting prevents it from forwarding messages from white processes in the East-West Phase. 

To ensure proper termination, processes execute the Decision Phase. After outputting the decision matrix, each process sends $done$ message to its horizontal, i.e. left and right, immediate neighbors. The process stops if it receives at least one of them (Line~\ref{LineStop}) and it has the decision matrix. One of the horizontal neighbors is guaranteed to be white. Thus, if a process obtains the decision matrix and gets one $done$, then its white neighbors may not be in the middle of the East-West Phase and this process may safely stop. 

\ \\
\textbf{Discussion on \PROG{BAT} functions.} 
Let us now discuss the functions used in \PROG{BAT}. They are shown in Algorithm~\ref{algBATfunctions}. Function \textbf{match()} accepts the input received from the West -- $\mathit{left}$, and East -- $right$ during the East-West Phase. The entries of $\mathit{left}$ and $right$ are of the following format: the column $c$ received by the original sender, its left neighbor identifier $l$, its own identifier $id$ and its right neighbor id $r$. 

First, $\mathit{left}$ and $right$ are individually checked for internal consistency in function \textbf{consistent()}. As process $p$ receives data from either East or West, the left and right neighbors of the subsequent entries should match. In grey-white row, this is true for all entries except for possibly the grey process that may start its East-West phase earlier or later than the white ones. In this case, the grey process entry may arrive out of order. That is, it may arrive together with another white process entry while its own spot in the sequence remains empty. 

Function \textbf{consistent()} finds a potential single entry that is out of sequence as well as a single process gap (Line~\ref{LineInsertGrey}). Specifically, \textbf{consistent}() checks whether there exists at most one element in its parameter $clmn$ such that if it is removed, $cminus$ is produced, and then another element is possibly added to $cminus$ producing $cplus$. This added entry contains $\bot$ for the column values and the $l, id, r$ tuple are such that the left and right neighbors in the consequent entries in resultant $cplus$ match. In this case \textbf{consistent}() returns $cplus$. Consider case (ii) shown in Figure~\ref{figBatRow}. Observe what $\mathit{rowLeft}$ and parameter $\mathit{left}$ contain:
\[
(1)\langle\cdot,c,d,e\rangle, (2)\langle\cdot, d, e, a\rangle, (3)\langle\cdot, e, a, b\rangle, (4)\bot, (5)\langle\cdot, b, c, d\rangle/\langle\cdot, a, b, c\rangle
\]
That is, the third round entry is empty and the fifth round entry contains the data originated by process $c$ and the grey process $b$. Process $b$ sends its data out of sync with the white processes. Function \textbf{consistent}(), on the basis of the adjacent entries, reconstructs the fourth entry and drops the extra fifth entry sent by $b$:
\[
(1)\langle\cdot,c,d,e\rangle, (2)\langle\cdot, d, e, a\rangle, (3)\langle\cdot, e, a, b\rangle, 
(4)\langle\bot, a, b, c\rangle, (5)\langle\cdot, b, c, d\rangle
\]
The corrected entry is stored in $\mathit{cleft}$ by \textbf{match()}. Similar manipulation by \textbf{consistent}() yields $\mathit{cright}$. Function \textbf{match}() determines that $\mathit{cleft}$ is equal to $\mathit{cright}$.  That is, the process $p$ gets the same data from both East and West.
In this case \textbf{match}() returns the matrix of columns stored by $\mathit{cleft}$ and $\mathit{cright}$.

\begin{theorem}\label{thrmBAT}
Algorithm~\PROG{BAT} solves the \emph{Weak Synchronous All-to-All Broadcast Problem} on an unknown torus with Byzantine faults in at most one column with at least one correct row in at most $2H+2+W$ rounds.
\end{theorem}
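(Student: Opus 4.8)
The plan is to establish the three broadcast requirements — every white process stops, and each white $p$ outputs $val.q.p = \mathit{initVal}.q$ for every white $q$ — and then to bound the running time, arguing phase by phase. I would first fix terminology consistent with the row names used in the overview: a \emph{white} process is a correct process in an all‑correct (fault‑free) column, a \emph{grey} process is the correct process sitting in the single faulty column, and the hypothesis that at least one row is fault‑free guarantees a \emph{grey-white} row whose grey cell is the correct relay across the faulty column. The first step is a synchronization lemma for the North Phase: since every fault‑free column contains only white processes and has height $H$, each white process collects its entire column correctly and receives its own $goNorth$ back after $H$ hops, i.e. within $H+1$ rounds; hence all white processes, in every fault‑free column, enter the East-West Phase in the very same round, and the only process in any row that can be out of step is its faulty‑column (grey or black) cell.

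The core of the argument is the East-West Phase, which I would split into a liveness claim and a safety claim. For liveness, in the grey-white row the only deviation from the synchronized white pattern is the single grey entry, which may arrive in the wrong round or carry $\bot$ in place of column data; this is exactly the one out‑of‑sequence entry plus one gap that \textbf{consistent()} is designed to detect and reconstruct, so the corrected East and West sequences agree, \textbf{match()} returns the correct column matrix, and these white processes output and launch the South Phase in \emph{every} fault‑free column. For safety I would show that no white process ever matches on wrong data. The key structural observation is that a lone Byzantine process $x$ in a black-white row splits the ring so that every entry other than $x$'s reaches a given white process $p$ without passing through $x$ in at least one of the two directions; hence each such entry is faithful in at least one of $\mathit{rowLeft}$, $rowRight$. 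Since \textbf{match()} emits output only when the two consistency‑corrected streams are equal, agreement forces every non‑$x$ entry to equal its faithful copy, so a successful match can only return the correct white columns (the faulty‑column slot being immaterial to the requirement). This is where I would invoke $W\ge 5$: the neighbor‑id chain checked by \textbf{consistent()} then admits no second, spurious reconstruction that $x$ could exploit to forge a consistent but false stream.

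Next I would close liveness for the remaining white processes using the South and Decision Phases. White processes in black-white rows either match (then, by safety, correctly) or fail and keep $matrix=\langle\rangle$; because the grey-white row initiates $goSouth$ in every fault‑free column and those columns are fault‑free, the correct matrix propagates straight down and every such straggler outputs it. The Decision Phase then yields termination: a process stops only once it holds a matrix and has received a $done$ from a horizontal neighbor, at least one of which is white, and I would show every white process receives such a $done$ shortly after the last output. For the running‑time bound I would simply add the phase latencies: North completes by round $H+1$, East-West adds $W$, South reaches the farthest process of its fault‑free column in at most $H-1$ further rounds, and the Decision handshake adds a constant, giving at most $2H+2+W$.

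The hard part will be the East-West Phase, in two interacting places. First, the safety proof must pin down precisely how a Byzantine relay may reorder, delay, duplicate, or corrupt the stream, and then show that the clean‑in‑one‑direction property, together with the single‑anomaly tolerance of \textbf{consistent()} and the bound $W\ge 5$, leaves no room to force an incorrect yet matching pair; a careful position‑by‑position ring argument seems unavoidable here. Second, and more delicate, is the termination subtlety flagged in the overview: a grey process must keep relaying East-West traffic until every message has made a full loop (around round $H+W+1$), yet a Byzantine column neighbor may inject a premature $goSouth$ to set the grey process's $matrix$ early. I would resolve this by showing that a grey process can obtain a $done$ only from a white horizontal neighbor, whose earliest possible output occurs only after the full East-West traversal has completed, so the Decision guard provably prevents any grey process from halting while its relaying is still needed.
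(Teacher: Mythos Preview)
Your plan mirrors the paper's proof almost exactly: it too proceeds via a North-Phase synchronization lemma (all white processes finish in round $H+1$ with correct columns), a propagation lemma for $goEast$/$goWest$ through non-black ranges, then splits the East-West analysis into the grey-white row (liveness: \textbf{consistent()} repairs the single out-of-place grey entry so \textbf{match()} succeeds) and the black-white row (safety: if \textbf{match()} succeeds, the output is white-value matching, because each non-black entry is faithful in at least one direction and equality of the two streams pins it down), followed by South-Phase delivery down white columns and a Decision-Phase termination argument. Your handling of the grey-termination subtlety is also the paper's: a grey process can only receive $done$ from a white horizontal neighbor, and white neighbors send $done$ no earlier than round $H+1+W$.

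One discrepancy to flag: you invoke $W\ge 5$ in the safety argument for black-white rows, but Theorem~\ref{thrmBAT} does not assume this (the bound $W\ge 5$ enters only for \PROG{CBAT}). The paper's proof of the black-white case does not need it: once \textbf{consistent()} returns $\mathit{cleft}$ and $\mathit{cright}$ with matching neighbor chains, the entry for the black process $x$ is uniquely anchored by its white neighbors $u,v$ on either side, and equality $\mathit{cleft}=\mathit{cright}$ forces every white entry to agree with its faithful-direction copy regardless of $W$. So either drop that hypothesis and argue uniqueness of the neighbor-chain reconstruction directly, or note explicitly that your proof yields a slightly weaker statement than the one claimed.
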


The above theorem states the correctness and efficiency of \PROG{BAT}. 
See the Appendix for proof details.
Let us discuss the intuition for the correctness proof. It is relatively straightforward to show that all white processes complete the first, North, Phase in $H+1$ round and collect the true contents of their respective columns. By induction on the number of rounds, we show that each process $p$ collects a column of true data about any process $q$ in the same row, so long as the processes between $p$ and $q$, in the direction of information propagation are non-black. Indeed, the non-black processes do not impede data propagation. 

We then separately consider white processes in (a) grey-white rows and (b) black-white rows. In the East-West Phase, the information propagates in two directions. We show that during this phase, in a grey-white row, in round $H+1+W$, the information spreading in two directions matches. The match is up to the potentially misplaced data from the single grey process. We prove that this information is true to the original white process data throughout the matrix. In the black-white row, we show that a white process either completes the phase in $H+1+W$ round with correct white process data or not at all. 

That is,  we show that each white process, regardless of the row location, that complete the East-West Phase, holds correct information about all the white processes in the matrix. Moreover, each white process in the grey-white row is guaranteed to complete this phase. 

We then consider the South Phase. There, we show that it is completed in an additional $H$ rounds, and all the white processes are updated with the correct matrix information before termination. This completes the proof of Theorem~\ref{thrmBAT}.

The computation model that we consider \PROG{BAT} is called LOCAL. It assumes unlimited size messages. Theorem~\ref{thrmBAT} shows that \PROG{BAT} completes in $O(H+W)$ rounds in LOCAL.

\section{\PROG{CBAT}: Consensus Using \PROG{BAT}}\label{secCBAT}

\noindent
\textbf{Description.} Observe that \PROG{BAT} cannot immediately be used to solve consensus since, after its completion, the outputs of each process differ by the values of black and grey processes. An individual white process is not able to determine the color of the senders. Therefore, if the white process makes the consensus decision on the basis of the single execution of \PROG{BAT}, the Byzantine senders may cause white processes to disagree on their outputs. 

Instead we use \PROG{BAT} to select a leader process and agree on the input value of this leader. A particular difficulty is presented if the selected leader process is faulty. In this case, the leader may send different values to different processes. To prevent that, the processes use \PROG{BAT} again to exchange received values, determine whether the leader is consistent in its transmission of its initial value, and, if not, replace the leader . This determination has to proceed despite the inconsistent information provided by the faulty processes in this second exchange. 

\begin{algorithm}[htb]
\setcounter{AlgoLine}{77} 
\scriptsize
\caption{\PROG{CBAT}}\label{algCBAT}
\SetKwData{receive}{receive}
\SetKwData{send}{send}
\SetKwData{output}{output}
\SetKwData{add}{add}\SetKwData{to}{to}\SetKwData{compute}{compute}
\SetKwData{discard}{discard}
\SetKwData{from}{from}
\SetKwData{deliver}{deliver}
\SetKwData{true}{true}\SetKwData{false}{false}
\SetKwData{bool}{boolean}\SetKwData{int}{integer}

\KwIn{$v$ \bool \tcp*[f]{consensus input value} }
\vspace{2mm}

\textbf{Variables:} \\

$M_B \equiv \{ v_{ij}: 1 \leq i \leq H, 1 \leq j \leq W \}$
    \tcp*[f]{matrix of votes received by each process}\\
$M_C \equiv \{ M_{ij}: 1 \leq i \leq H, 1 \leq j \leq W \}$
    \tcp*[f]{votes reported as received by each process}\\
$M_B$ and $M_C$ are initially $\bot$ \\
\vspace{2mm}
\textbf{Stages:} \\
\SetKwProg{Broadcast}{Broadcast}{:}{}
\Broadcast(){}{
    $M_B \leftarrow {\cal BAT}(v)$
}

\SetKwProg{Confirm}{Confirm}{:}{}
\Confirm(){}{
   $M_C \leftarrow {\cal BAT}(M_B)$
}

\SetKwProg{Decide}{Decide}{:}{}
\Decide(){}{
    \tcp{deterministically select leader} 
    $ldr \leftarrow highestID(M_B)$ 
    \label{LineFirstLeader}\\
    let $v_{mn} \in M_B$ be the input value of $ldr$\\
    let $C_L \in M_B$ \tcp*[f]{leader's column in $M_B$}\\
    let $M_L = \{v_{ij}: v_{mn} \in M_{ij} \in M_C \} $  \tcp*[f]{$v$ sent by $ldr$, received by every process } \\
    \ \\
    \tcp{two columns with $\bot$ elements or }
    \tcp{two columns with 1/0 elements or}
    \tcp{two columns whose values differ from rest}
    \If{ $\exists$ $q \neq r$ and $\exists$ $e,f$ : \\
    \quad $v_{qe}, v_{rf} \in M_L\setminus C_L : $\\
    \quad $v_{qe} = v_{rf} = \bot $ \\
    \quad or  \\  
    \quad  $\exists$ $w \neq x$, $y \neq z$ and $\exists$ $a,b,c,d$ : \\
    \quad      $v_{wa}, v_{xb}, v_{yc}, v_{zd} \in M_L \setminus C_L$ : \\
    \quad     $v_{wa} = v_{xb} = 0$ and $v_{yc} = v_{zd} = 1$} {
    \tcp{select a new leader from a different column}    
    $ldr \leftarrow highestID(M_B \setminus C_L)$ 
    \label{LineSecondLeader}\\
    \tcp{recompute $M_L$ and $C_L$ }
    }
    \textbf{Output:} $majority(M_L\setminus C_L)$ 
}

\end{algorithm}

Let us describe the algorithm \PROG{CBAT} that achieves consensus. It is shown in Algorithm~\ref{algCBAT}. We assume that the number of columns $W \geq 5$.
The algorithm has three sequentially executing stages: Broadcast, Confirm and Decide. In the first two stages \PROG{CBAT} executes \PROG{BAT}. 

In the Broadcast Stage, the processes exchange the initial input values. In the Confirm Stage, they transmit the complete matrix they received during Broadcast. At the end of Confirm, each white process receives $H\times W$ matrices of such values.  In the Decide Stage, every white process independently arrives at the consensus decision. 

Let us describe how this decision is computed. Each process $p$ deterministically selects a leader process $ldr$. In the algorithm, it is the process with the highest identifier. Then, out $H \times W$ matrices that process $p$ receives in the second stage, $p$ composes the matrix $M_L$ of values sent to all other processes by the leader as these processes report to $p$. Process $p$ examines these values for inconsistencies. The inconsistencies are as follows: either the values differ in more than one column; or the values contain $\bot$, a sign that \PROG{BAT} detected faulty values, in more than one column; or the values in at least two columns differ from the rest.
In the latter case, since the total number of columns is at least $5$, and the leader column is not considered, there are at least two columns that are different from at least two other columns. 
In all cases, these inconsistencies can be determined by all correct processes. Once the faulty process is detected, the whole column where all the faulty processes may be located is also discovered. Therefore, $p$ selects a new leader from a different column.
Once each process selects the leader, that leader may still be faulty and send arbitrary values to other processes. However, these values are consistently stored in the $M_L$ matrix. Each process decides on the majority of values in this matrix. 

\ \\
\textbf{Correctness proof.}
Note that the below discussion relies on correctness of \PROG{BAT} proved in Theorem~\ref{thrmBAT}.
Denote $val.a$ the input value $\mathit{initVal}$ of some fixed process that is reported by process $a$ after the Broadcast Stage of \PROG{CBAT}. Denote $val.a.b$ the same fixed process value received by process $b$ from process $a$ after the Confirm Stage.

\begin{lemma}\label{lem2stageOK} Let $v$, $w$, $x$ be white processes executing \PROG{CBAT} and distributing the value input to some process $u$. Then, after the Confirm Stage of \PROG{CBAT}, if $u$ is white then $val.v.x = val.w.x$; and 
$val.v.x = val.v.y$ regardless of the color of $u$. 
\end{lemma}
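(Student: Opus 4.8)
The plan is to reduce both equalities to the white-to-white delivery guarantee of \PROG{BAT} (Theorem~\ref{thrmBAT}), applied once to the Broadcast Stage and once to the Confirm Stage. First I would unwind the two-level notation. Fix the process $u$ whose input is being tracked. For a white process $a$, the quantity $val.a$ is the entry for $u$ in $a$'s Broadcast output $M_B$, i.e.\ the value $a$ holds for $u$ after the first call to \PROG{BAT}. For white processes $a$ and $b$, the quantity $val.a.b$ is the $u$-entry inside the copy of $a$'s matrix that $b$ obtains as part of its Confirm output $M_C$; equivalently, it is $b$'s view of what $a$ reported about $u$. Here $x$ and $y$ both play the role of white receivers in the Confirm Stage.

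The key step is a single relay property for the Confirm Stage: for any two white processes $a$ and $b$, $val.a.b = val.a$. I would prove this by observing that in the Confirm Stage every process feeds its entire Broadcast matrix $M_B$ into \PROG{BAT}; since $a$ is white its input to this call is the genuine matrix $M_B.a$, and since $b$ is white Theorem~\ref{thrmBAT} guarantees that $b$ outputs $a$'s input faithfully. Reading off the $u$-entry on both sides gives $val.a.b = val.a$. Note that this uses nothing about $u$.

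With the relay property in hand both claims are immediate. For the second equality, apply it with $a=v$ and receivers $x$ and $y$: $val.v.x = val.v = val.v.y$. This needs only that $v$, $x$ and $y$ are white --- the color of $u$ never enters, which is exactly the asserted ``regardless of the color of $u$.'' For the first equality, apply the relay property twice to get $val.v.x = val.v$ and $val.w.x = val.w$, and then use the Broadcast Stage: when $u$ is white, Theorem~\ref{thrmBAT} applied to the first \PROG{BAT} call forces every white process, in particular $v$ and $w$, to record $u$'s true input, so $val.v = \mathit{initVal}.u = val.w$. Chaining the three equalities yields $val.v.x = val.w.x$.

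The work is essentially bookkeeping, so the main obstacle is conceptual rather than computational: being precise about exactly which part of \PROG{BAT}'s guarantee is invoked and where the color hypotheses are actually needed. Theorem~\ref{thrmBAT} promises correct delivery only between white processes and says nothing about the matrix entries belonging to grey or black senders; I must therefore keep the argument entirely inside the white-to-white regime and never appeal to the correctness of $u$'s value except in the first claim, where $u$ is assumed white. I should also confirm that invoking \PROG{BAT} on arbitrary matrices in the Confirm Stage is a legitimate instance of the broadcast problem --- it is, since the LOCAL model places no bound on message size --- so that the black-box guarantee of Theorem~\ref{thrmBAT} transfers verbatim.
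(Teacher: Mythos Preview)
Your proposal is correct and follows essentially the same approach as the paper: both arguments apply Theorem~\ref{thrmBAT} once to the Confirm Stage to obtain $val.a.b = val.a$ for white $a,b$, and once to the Broadcast Stage (only when $u$ is white) to obtain $val.v = \mathit{initVal}.u = val.w$. The only cosmetic difference is that you isolate and name the ``relay property'' explicitly and treat the second claim first, whereas the paper interleaves the two claims; the underlying logic is identical.
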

\begin{proof} Let us address the first claim of the lemma where $u$ is white. In the Broadcast Stage, processes are sending their individual input values using \PROG{BAT}. This includes $initVal$ of process $u$.  If the sender $u$ is white and the recipients $v$ and $w$ are white, then, 
according to Theorem~\ref{thrmBAT}, $initVal = val.v = val.w$. During the Confirm Stage, each process, including $v$ and $w$, sends the complete matrix of received values, including $val.v$ and $val.w$ to all processes, including $x$ using \PROG{BAT}. Since $v$, $w$, and $x$ are white, according to Theorem~\ref{thrmBAT}, $val.v = val.v.x$ and $val.w = val.w.x$. That is, $val.v.x = val.w.x$. This proves the first claim of the lemma. 

Let us consider the second claim that should hold regardless of the color of $u$. Consider the value $val.v$ stored at $v$ after the Broadcast Stage. During Confirm, this value is sent to all processes including $x$ and $y$. If $v, x$ and $y$ are white, according to Theorem~\ref{thrmBAT} $val.v = val.v.x = val.v.y$. Hence the second claim of the lemma also holds.\qed\end{proof}

\begin{lemma}\label{lemSameLeader}
Every white process selects the same leader in the Decide Stage of \PROG{CBAT}.
\end{lemma}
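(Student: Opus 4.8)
The plan is to show that the leader-selection procedure is a deterministic function of data that all white processes agree on. The leader is chosen via $highestID(M_B)$, possibly revised to $highestID(M_B \setminus C_L)$ if the inconsistency test on $M_L$ fires. Since $highestID$ and the inconsistency predicate are deterministic functions of their arguments, it suffices to prove that every white process evaluates them on identical inputs: first that all white processes agree on the set of identifiers and input values appearing in $M_B$ (so the initial $highestID(M_B)$ agrees), and second that all white processes agree on the truth value of the inconsistency condition (so they either all revise the leader or all keep it, and agree on the revised leader too).

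First I would establish the agreement on $M_B$. By Theorem~\ref{thrmBAT}, after the Broadcast Stage every white process holds a $matrix$ whose entries for white processes are the true input values, and whose entries for black/grey columns may be arbitrary but are \emph{identical across all white processes} up to the at-most-one suppressed ($\bot$) column that \textbf{match()} strips — this is exactly the content of the second claim of Lemma~\ref{lem2stageOK} applied at the Broadcast level: $val.v.x = val.v.y$ regardless of color, meaning every white process records the same value as coming from each fixed sender. Hence the collection of (identifier, value) pairs that any white process sees in $M_B$ is the same, so $highestID(M_B)$ returns the same leader identifier at every white process. I would note that identifiers are unique and known to neighbors, so the identifier of the highest-ID process is unambiguous even if that process is faulty.

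Next I would argue that the inconsistency test produces the same verdict at every white process. The test is evaluated on $M_L = \{v_{ij} : v_{mn} \in M_{ij} \in M_C\}$, the set of values that each process reports (in $M_C$, the Confirm-Stage output) as having received from the leader. The key fact, supplied by Lemma~\ref{lem2stageOK}, is that for white reporting processes $v$, the value $val.v.x$ is the same at every white receiver $x$; so the portion of $M_L$ contributed by white columns is identical across all white processes. For black/grey reporting columns, the reported values may be arbitrary, but again by the second claim of Lemma~\ref{lem2stageOK} (agreement regardless of the color of the source) every white process receives the \emph{same} such arbitrary value for a given reporting process. Therefore $M_L \setminus C_L$ is, up to the single $\bot$-suppressed column, identical at every white process, and since the inconsistency predicate is a purely syntactic condition on this set, all white processes obtain the same boolean result. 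When the predicate fires, the offending column is identifiable identically (it is the column carrying the flagged $\bot$ pair, or one of the two deviating columns), so $highestID(M_B \setminus C_L)$ also agrees. This yields the lemma.

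The main obstacle I anticipate is the careful bookkeeping around the at-most-one suppressed column and the grey/black columns: I must make precise that ``identical across white processes'' holds even for the garbage values injected by faulty reporters, which is not automatic but follows from the broadcast guarantee that every white process delivers the \emph{same} value from each fixed (even faulty) source — this is the $val.v.x = val.v.y$ clause, and I would want to confirm it is stated to hold for a faulty reporting process $v$, not only a white one. A secondary subtlety is the assumption $W \geq 5$: with at most one faulty column and the leader column excluded, this guarantees enough clean columns that the inconsistency test cannot be spoofed into disagreeing verdicts, and I would flag where this counting is used. Once these points are nailed down, the determinism of $highestID$ and of the predicate closes the argument immediately.
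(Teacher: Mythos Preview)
Your argument has a genuine gap, and you in fact anticipate it yourself: you rely on the claim that every white process receives the \emph{same} value from a given non-white reporter, invoking the second clause of Lemma~\ref{lem2stageOK}. But that clause says $val.v.x = val.v.y$ \emph{regardless of the color of $u$}, where $u$ is the original source (here the leader) and $v$ is the reporting process; it still requires $v$ to be white. It gives you no control over what two different white receivers record from a grey or black reporter. Consequently your assertion that $M_L \setminus C_L$ is identical across all white processes is unsupported when $C_L$ is a white column and the faulty column remains in $M_L \setminus C_L$. In that situation two white receivers could see different garbage from the faulty reporters and, in principle, evaluate the inconsistency predicate differently.

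The paper closes this gap not by strengthening the agreement claim but by a case split on the color of $ldr$. If $ldr$ is correct, the first clause of Lemma~\ref{lem2stageOK} forces all white-column entries of $M_L$ to equal $\mathit{initVal}.ldr$; only the single faulty column can deviate or contain $\bot$, so the predicate (which requires inconsistencies in \emph{two} non-leader columns) cannot fire at any white process, and no one changes leader. If $ldr$ is faulty, then $C_L$ is precisely the faulty column, so $M_L \setminus C_L$ consists entirely of white reporters; now the second clause of Lemma~\ref{lem2stageOK} does apply to every remaining entry, $M_L \setminus C_L$ is identical at all white processes, and the predicate evaluates identically everywhere. This case analysis is the missing idea; once you add it, the rest of your outline (determinism of $highestID$, agreement on identifiers from size-and-id matching) goes through. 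Note also that $W\ge 5$ is not needed for this lemma; it is used only later for the majority argument in Lemma~\ref{lemAgreement}.
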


\begin{proof}
For each white process $p$,  algorithm \PROG{BAT} is guaranteed to produce a matrix that has the same identifiers, configuration and size. This means that all processes select the same $ldr$ in Line~\ref{LineFirstLeader} of \PROG{CBAT}. 

Let us consider whether each process changes its leader in Line~\ref{LineSecondLeader}. This happens if process $p$ detects inconsistencies in at least two columns excluding the leader column.
Process $ldr$ is either correct or faulty. 

If $ldr$ is correct, then according to the first claim of Lemma~\ref{lem2stageOK}, the entries for $ldr$ in $M_L$ at $p$ are the same, except for possibly a single column of faulty processes. This means that if $ldr$ is correct, none of the white processes changes its leader after initial selection.

If $ldr$ is faulty, then according to the second claim of Lemma~\ref{lem2stageOK}, the corresponding entries for $ldr$ in $M_L$ for all white processes are equal. That is, if processes $p$ and $q$ are white, then if entry $m_{ij} \in M_L$ in  $p$ is equal to $m_{ij} \in M_L$ in $q$. 

That is, if two non-leader columns are inconsistent, then all white processes detect that. That is, if one white process changes leader, all white processes change leaders also.  If the processes change leaders, by the operation of the algorithm, they select the same leader. Hence the lemma.\qed
\end{proof}

\begin{lemma}\label{lemAgreement}
All white processes in \PROG{CBAT} output the same value. If the selected leader $ldr$ is correct, they input $\mathit{initVal}.ldr$.
\end{lemma}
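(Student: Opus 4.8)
The plan is to separate the two assertions and reduce both to the reporter-consistency established in Lemma~\ref{lem2stageOK} together with the common leader of Lemma~\ref{lemSameLeader}. Every white process decides $majority(M_L\setminus C_L)$, where the entry of $M_L$ contributed by a process $a$ is the value $val.a.p$ that the deciding process $p$ obtained in the Confirm Stage as the leader's value as reported by $a$. By Lemma~\ref{lemSameLeader} all white processes evaluate this expression for the same leader and the same column $C_L$, so I may fix one leader throughout. Call an entry \emph{stable} when it is identical at every white process; by the second claim of Lemma~\ref{lem2stageOK} every entry $val.a.p$ with $a$ white is stable.

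For \emph{agreement} I would split on the column of the finally selected leader, first noting that a correct process lying in a fault-free column never fires the replacement test of Line~\ref{LineSecondLeader}: by the first claim of Lemma~\ref{lem2stageOK} its white reporters all agree, so only the single faulty column of $M_L$ can deviate, which is one column short of the threshold. Hence the final leader sits in a fault-free column exactly when it is correct and white, and in the faulty column exactly when it is grey or black. If the leader is in the faulty column then $C_L$ is that column, so $M_L\setminus C_L$ consists solely of entries $val.a.p$ with $a$ white, every one of which is stable; thus $majority(M_L\setminus C_L)$ is a deterministic function of stable data and is identical at all white processes. If the leader is white then, by the first claim of Lemma~\ref{lem2stageOK} together with Theorem~\ref{thrmBAT}, every white reporter reports $\mathit{initVal}.ldr$, so deleting the single white column $C_L$ leaves $W-2\ge 3$ white columns entirely filled with $\mathit{initVal}.ldr$, i.e.\ at least $3H$ entries, against one remaining faulty column of at most $H$ entries. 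As $3H>H$, the majority equals $\mathit{initVal}.ldr$ no matter what the faulty column holds, which settles agreement here and, simultaneously, validity whenever the correct leader is white; the hypothesis $W\ge 5$ is used precisely to force this strict domination.

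The step I expect to be the main obstacle is \emph{validity} when the correct leader is \emph{grey}, i.e.\ the correct process sitting inside the faulty column. Then $C_L$ is the faulty column and $M_L\setminus C_L$ carries only the white reporters' values for a sender whose value \PROG{BAT} does \emph{not} guarantee to deliver uniformly: the first claim of Lemma~\ref{lem2stageOK} assumes a white originator and simply does not apply, and black processes sharing the grey leader's column can plant differing values for it in the matrices reconstructed by different rows. To rescue validity I would attempt to show that such a leader can never be the one finally decided upon: if its value fails to reach all white processes uniformly as $\mathit{initVal}.ldr$, the resulting disagreement should surface as at least two rows holding $0$ and two holding $1$, or as two $\bot$ entries, tripping the test of Line~\ref{LineSecondLeader} and promoting a fault-free-column (white) leader for which validity was already shown; and if it does reach every white process as $\mathit{initVal}.ldr$, validity holds outright. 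Calibrating these detection thresholds against the guarantee of only a single fault-free row is the delicate heart of this direction, and it may in fact resist a clean argument when exactly one row is fault-free, since the guaranteed-correct entries then all lie in that one row. Should that happen, it suffices for the consensus theorem to observe that the validity obligation only arises in a fault-free execution, where every process is white and the leader is therefore white, so that the already-completed white-leader case closes the proof.
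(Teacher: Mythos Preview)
Your agreement argument is correct and uses a slightly different case split than the paper's. The paper splits on whether the final leader is correct versus faulty: for a correct leader it appeals to Lemma~\ref{lem2stageOK} to get all white entries of $M_L\setminus C_L$ equal to $\mathit{initVal}.ldr$ and then uses $W\ge 5$ for the majority; for a faulty leader that was replaced it reduces to the correct case; for a faulty leader that was not replaced it argues that at most one column of $M_L\setminus C_L$ is inconsistent and, by the second claim of Lemma~\ref{lem2stageOK}, that this column is the same at every white process, so the majority over the remaining $\ge 3$ identical columns is common. You instead split on whether the leader's column is the faulty column or a white one, which makes the first case immediate (every reporter left in $M_L\setminus C_L$ is white, hence every entry is stable) and handles the second case by the $3H$-versus-$H$ count. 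Both decompositions establish agreement; yours is arguably cleaner because it never needs to reason about which column carries the ``inconsistencies.''

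On the second clause you are in fact more careful than the paper. The paper writes ``if the leader is correct, then, according to Lemma~\ref{lem2stageOK}, all white entries in $M_L\setminus C_L$ are equal to $\mathit{initVal}.ldr$,'' but the first claim of Lemma~\ref{lem2stageOK} requires the originator to be \emph{white}, and Theorem~\ref{thrmBAT} gives no delivery guarantee for a grey originator; so the paper's own proof has exactly the gap you isolate for a grey leader. Your fallback is the right repair and matches how the lemma is actually consumed: in Theorem~\ref{thrmCBAT} the second clause is used only for consensus validity, which is stated for fault-free runs where every process---hence the leader---is white. Your speculative attempt to rescue the grey-leader case via the replacement test is unnecessary for the paper's purposes and, as you suspect, does not go through cleanly with only a single guaranteed fault-free row.
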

\begin{proof}
Every white process $p$ outputs the majority of values in $M_L \setminus C_L$. Due to Lemma~\ref{lemSameLeader}, white processes in \PROG{CBAT} select the same leader. The leader may be either correct or faulty.

If the leader is correct, then, According to Lemma~\ref{lem2stageOK}, all white entries in $M_L \setminus C_L$ are equal to $\mathit{initVal}.ldr$. Only one column in $M_L \setminus C_L$ may be non-white with potentially arbitrary values. Since we assume that the torus contains at least $5$ columns, the majority of $M_L \setminus C_L$ is equal to $\mathit{initVal}.ldr$ and that is what the white process $p$ outputs.  

Let us consider the case of faulty leader. White processes change the leader only when they detect that the original leader is faulty. They select it from a different column, therefore, the new leader is correct and the previous reasoning to the output value applies. 

The remaining case is that of a faulty leader and no leader change. 
According to the operation of the algorithm, after leader selection at most one column in $M_L\setminus C_L$ has inconsistencies. The rest of the entries hold the same value $x \in {0,1}$. According to the second claim of Lemma~\ref{lem2stageOK}, the inconsistencies are in the same column of $M_L\setminus C_L$ for each white process. This means that the rest of the values in the matrix are $x$. Since white processes output the majority value and the number of columns is at least $5$, all white processes output the same value $x$. \qed
\end{proof}

\begin{theorem}\label{thrmCBAT}
\PROG{CBAT} solves weak consensus on an unknown torus whose width is at least $5$ with Byzantine faults in at most one column and with at least one correct row. 
\end{theorem}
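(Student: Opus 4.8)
The plan is to verify directly that \PROG{CBAT} satisfies the three defining properties of consensus---agreement, validity, and termination---restricted to white processes, assembling the lemmas already established. Since the hard technical work is isolated in Lemmas~\ref{lem2stageOK}--\ref{lemAgreement} and in the correctness of \PROG{BAT} (Theorem~\ref{thrmBAT}), the theorem itself is mostly a matter of combining these results and checking the boundary cases. First I would dispatch agreement, which is immediate: Lemma~\ref{lemAgreement} states that all white processes output the same value, so no two white processes decide differently.

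Next I would establish validity. When the torus is fault-free, there are no black or grey processes and every process is white; in particular the leader $ldr$ selected in Line~\ref{LineFirstLeader} is correct, and no white process triggers the leader change in Line~\ref{LineSecondLeader} (by the correct-leader case of Lemma~\ref{lemSameLeader}). By Lemma~\ref{lemAgreement}, every white process outputs $\mathit{initVal}.ldr$. Since validity only requires a conclusion when every input equals a common value $v$, we have $\mathit{initVal}.ldr = v$, and thus each process outputs $v$. Here I would also note explicitly that the width assumption $W \geq 5$ is what guarantees the majority computation on $M_L \setminus C_L$ returns the correct value even when one column is faulty, as already used inside Lemma~\ref{lemAgreement}.

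Finally I would argue termination. \PROG{CBAT} consists of two sequential invocations of \PROG{BAT} (the Broadcast and Confirm Stages) followed by a purely local Decide Stage. By Theorem~\ref{thrmBAT}, each \PROG{BAT} invocation halts at every white process within $2H+2+W$ rounds, and the Decide Stage performs only finite local computation on the received matrices $M_B$ and $M_C$. Hence every white process decides within $O(H+W)$ rounds, which gives termination.

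The main obstacle is not in the theorem's own argument but in guaranteeing that the Decide-stage inconsistency test is applied to mutually consistent data across white processes---that is, that all white processes observe the same inconsistencies and therefore make the same leader-change decision. This is exactly the content secured by the second claim of Lemma~\ref{lem2stageOK} and by Lemma~\ref{lemSameLeader}; once those are in hand, the remaining steps reduce to routine case analysis over whether the selected leader is correct or faulty. I would take particular care in the faulty-leader/no-change case to confirm that the at-most-one inconsistent column, combined with $W \geq 5$, leaves a strict majority of identical values in $M_L \setminus C_L$, so that every white process's majority vote coincides and agreement is preserved.
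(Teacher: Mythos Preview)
Your proposal is correct and follows essentially the same approach as the paper: you verify agreement via Lemma~\ref{lemAgreement}, validity via the correct-leader clause of Lemma~\ref{lemAgreement} in the fault-free case, and termination via Theorem~\ref{thrmBAT} applied to the two sequential \PROG{BAT} invocations plus the finite Decide Stage. Your additional remarks on the role of $W\geq 5$ and on the faulty-leader/no-change case are elaborations already absorbed into the cited lemmas, but they do no harm.
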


\begin{proof}
We prove the theorem by showing that \PROG{CBAT} satisfies all three properties of consensus.
The agreement property of the consensus requires that all white processes output the same value. Lemma~\ref{lemAgreement} indicates that \PROG{CBAT} satisfies that property. The validity property states that in case there are no faults and all processes are input the same value, they should all output the same value. According to Lemma~\ref{lemAgreement}, if the leader is correct, all correct processes output its input value. Hence, if all processes are correct, and they are all input the same value, they select one process as the leader and output this value. Therefore, \PROG{CBAT} satisfies validity.

Let us address termination. The consensus requires that each correct process terminates. \PROG{CBAT} sequentially executes three stages. The first two stages are executions of \PROG{BAT} and the last one is a finite computation. According to Theorem~\ref{thrmBAT}, \PROG{BAT} terminates. Hence, all three stages of \PROG{CBAT} terminate and this algorithm satisfies the termination property of the consensus.\qed
\end{proof}

Algorithm \PROG{CBAT} sequentially executes \PROG{BAT}. Each \PROG{BAT} completes in $O(H+W)$ rounds in LOCAL. This means that \PROG{CBAT} also completes in $O(H+W)$ rounds.  

\vspace{-3mm}
\section{Extension to Fixed Message Size}\label{secFixed}

As presented, \PROG{BAT} is assumed to operate with unlimited size messages. However, it can be modified to operate with fixed size messages as follows. Observe that the messages get progressively larger as they accumulate the data about the torus. In the first phase, the North Phase, the messages are fixed size since each process only transmits its identifier and its input value. At most one message is sent per link per round. After the completion of this phase, the white processes discover the height $H$ of the torus. In the next phase, the East-West Phase, the processes exchange messages whose size is proportional to the height of the torus. Since white processes know the torus size, this message may be replaced by $H$ fixed size messages transmitted over $H$ rounds. Due to the operation of the algorithm, at most $2$ messages are transmitted per round in the East-West Phase. In the fixed size implementation, each process waits for $2H$ rounds to receive appropriate messages.
The black processes may deceive the grey process and assume the larger torus height. This would make the grey process transmit a message larger than $H$ rounds during the East-West Phase. This, in turn, may prevent the correct message from being transmitted in the same round. To eliminate that, the blocks of the two messages need to be transmitted in the round-robin manner. The South Phase message transmits the complete matrix. So the fixed size implementation has to wait for $H\cdot W$ rounds to receive a single matrix. 

The fixed message size model is called CONGEST.
Let us estimate the running time of the modified algorithm in CONGEST. 
The four phases of the original \PROG{BAT} take $O(H)$, $O(W)$, $O(H)$ and $O(1)$ rounds, respectively. The above argument shows that in each respective phase, the modified \PROG{BAT} needs to send $O(1)$, $O(H)$, $O(HW)$ and $O(1)$ sequential fixed-size messages per round. Hence, the number of rounds in fixed-size message \PROG{BAT} is dominated by the third phase and is in $O(H^2W)$.

Let us now discuss the fixed message size modification of \PROG{CBAT}. In the first stage, \PROG{CBAT} uses \PROG{BAT} to broadcast constant-size decision values. Hence, its run time is in $O(H^2W)$. In the second stage, each process sends a complete $H\cdot W$ matrix. Hence, the run time of this stage, and of the whole algorithm is in $O(H^3W^2)$.

\vspace{-2mm}
\section{Conclusion and Future Work}\label{secEnd}
Algorithm \PROG{BAT} assumes that all the faults are in the same column. In future research, the following conjectures are worth investigating.
We believe that solving the problem with faulty processes spread across multiple columns, one fault per row, is possible but requires substantial modification of the algorithm. We suspect that solving the problem for the case of more than one fault per row is not possible. 

To achieve the solution presented in this paper, we assumed shared process orientation. It is interesting to explore how important this assumption is to the solution. That is, whether it is possible to solve the problem without shared orientation. 
 
In this paper we presented a Byzantine-tolerant consensus algorithm that exploits the knowledge of the network type -- torus, to exceed the tolerance bound presented by the general consensus algorithm.  Our study then opens the following question: what network types have similar properties? To put another way, what specifically makes torus Byzantine-fault resistant and can this property be generalized? We believe that this is a fruitful avenue of future research.

\newpage
\bibliographystyle{abbrv}
\bibliography{torus}

\newpage
\section*{Appendix}

\ \\
\textbf{\PROG{BAT} correctness proof.} We assume that in each computation, the rounds are numbered from $1$ onward.  We say that a process \emph{completes} one of the first three phases (North, East-West, and South) when it starts the next phase by sending the appropriate messages (see Lines~\ref{LineStartEastWest}, \ref{LineStartSouth}, \ref{LineStartDecision}). A process completes the Decision Phase when it executes \textbf{stop} (Line~\ref{LineStop}).

During the computation, each process $p$ records input values of the processes in a column.  We say that this column is \emph{size and id matching to origin} if the column size, order and ids match the actual origin processes. The column is \emph{value matching to origin} if it is size, id and matching to origin and the values of the corresponding entries in the column match the input value of the origin processes. The column is \emph{white value matching to origin}  if it is size, order and id matching and the values of all white processes match the input values of the origin processes. The definitions of a row is stated similarly.

\begin{lemma}\em\label{lemNorthPhaseDone}
{\em In \PROG{BAT}, every white process $p$ completes the North Phase in round $H+1$. After completion, $column.p$ is size, id and value matching to origin.}
\end{lemma}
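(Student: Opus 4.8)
The plan is to exploit the defining feature of a white process: it lies in a fault-free column, so every one of the $H$ processes in its column is correct and executes \PROG{BAT} honestly, all starting the North Phase simultaneously via the initial action (Line~\ref{lineInitAlg}). Because the column is fault-free, no process can delay, drop, or corrupt a $goNorth$ message, so each such message propagates exactly one hop upward per round. The argument is then an induction on the round number $k$ that tracks, for each $k$, which origin's value $p$ receives from $down.p$ and appends to $column.p$.

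For the induction I would establish the invariant that in round $k$, for $1 \le k \le H$, process $p$ appends the pair $\langle \mathit{initVal}.(\mathit{down}^{k-1}.p),\, \mathit{down}^{k-1}.p\rangle$ to $column.p$. The base case $k=1$ is the initial action, which adds $\langle \mathit{initVal}.p, p\rangle$ and sends $p$'s own message upward. For the inductive step with $k \ge 2$ (so $1 \le k-1 \le H-1$), I would argue that the message originated by $q = \mathit{down}^{k-1}.p$ is emitted in round $1$ and, being forwarded one hop per round by the honest processes above it, reaches $p = \mathit{up}^{k-1}.q$ precisely in round $k$; since $\mathit{down}^{k-1}.p \neq p$ in this range, the guard $\mathit{id} \neq p$ holds, so $p$ appends the entry and forwards the message further up (its $northDone$ flag is still \textbf{false}). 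Carrying this one round further, in round $H+1$ the message $p$ itself emitted in round $1$ has traversed the full length-$H$ column cycle and returns to $p$ from $down.p$ with $\mathit{id} = p$; $p$ then takes the else branch, sets $northDone$, and sends $goEast/goWest$ (Line~\ref{LineStartEastWest}), i.e. completes the North Phase in round $H+1$, as claimed.

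It then remains to read off $column.p$. After rounds $1$ through $H$ it holds exactly $H$ entries, in the order $p, \mathit{down}.p, \dots, \mathit{down}^{H-1}.p$, which are precisely the $H$ distinct processes of $p$'s column, and the recorded values are their true $\mathit{initVal}$ since all these processes are honest; hence $column.p$ is size, id, and value matching to origin. The one point requiring care is ruling out premature or late completion: because the column is fault-free and all identifiers are distinct, the only message carrying $\mathit{id} = p$ is $p$'s own, which cannot return before round $H+1$, so $p$ sees $\mathit{id} = p$ for the first time exactly in round $H+1$ and thereby accumulates all $H$ column entries before transitioning. The genuinely delicate behaviour, the desynchronization and out-of-order arrivals induced by a black process, cannot occur in a fault-free column and is deferred to the later analysis of grey-white and black-white rows.
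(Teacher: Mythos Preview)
Your argument is correct: the key observation that a white process lies in an all-correct column, together with the one-hop-per-round propagation of $goNorth$, yields exactly the induction you sketch, and the termination check via $p$'s own identifier returning after $H$ hops gives completion in round $H+1$ with $column.p$ holding the $H$ correct pairs in cyclic order.

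In fact the paper does not supply a proof of this lemma at all; it is stated in the appendix without proof, and the main text merely remarks that ``it is relatively straightforward to show that all white processes complete the first, North, Phase in $H+1$ round and collect the true contents of their respective columns.'' Your write-up is precisely the straightforward argument the authors allude to, so there is nothing to compare against --- you have filled in what they chose to omit.
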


Observe that, depending on the messages of the faulty neighbors, a grey process may complete the North Phase in the same round with the white process, earlier, later or never. It, however, completes it at most once.

Let $eastRange(p,i)$ be the sequence of $i$ processes in $p$'s row to the East, i.e. right, of $p$. 
Process $p$ itself is not included. The tail of $eastRange$ is the process furthest away from $p$ going East. Let $westRange(p,i)$ be the similarly defined sequence to the West of $p$.

Denote $\mathit{rowLeft}(i)$ the entry in \emph{rowLeft} that was added to it in round $i$. Similarly, $rowRight(i)$ is the entry added to \emph{rowRight} at round $i$. Note that since grey or black process may send horizontal messages out of synch,
other processes may send multiple messages in a single round. Thus,  multiple entries may potentially be added to \emph{rowLeft} in the same round. Also, if a process does not send a message in a particular round, no entry is added then.

\begin{lemma}\label{lemEastWestWhiteOK}
 Let $p$ be a non-black process that completed the North Phase in round $rnd$. Consider
$eastRange(p,i)$ in round $rnd + i$ where $0 \leq i < W$. If $eastRange(p,i)$ is non-black then  (a) for every process $q \in eastRange(p,i)$ it follows that $\langle column.p, \mathit{left}.p, p, right.p \rangle = \mathit{rowLeft}(i).q$  
and (b)  message \\
$\mathit{goEast}(\langle column.p, \mathit{left}.p, p, right.p \rangle)$ is in the outgoing right channel of the tail process of $eastRange(p,i)$.

Similarly, if all processes in $westRange(p,i)$ are non-black, then for every process $q \in westRange(p,i)$, $\langle column.p, \mathit{left}.p, p, right.p \rangle = \mathit{rowRight}(i).q$ and the message $goWest(\langle column.p, \mathit{left}.p, p, right.p \rangle)$ is in the outgoing left channel of the tail process of $westRange(p,i)$.
\end{lemma}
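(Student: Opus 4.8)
The plan is to prove both parts simultaneously by induction on $i$, tracking the single message that $p$ originates as it travels east one hop per round; the west direction is symmetric, so I would establish the east claim in full and then remark that $goWest$, $rowRight$, and the outgoing left channel behave identically. Write $eastRange(p,i) = \langle q_1, \dots, q_i \rangle$ with $q_k = right.q_{k-1}$ and $q_0 = p$, and adopt the convention that the tail of the empty sequence $eastRange(p,0)$ is $p$ itself. The invariant I would carry is exactly part (b): at round $rnd+i$ the message $goEast(\langle column.p, \mathit{left}.p, p, right.p\rangle)$ sits in the outgoing right channel of $q_i$; part (a) then falls out as a byproduct, recorded as the wavefront passes each $q_k$. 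For the base case $i=0$, since $p$ completes the North Phase in round $rnd$, it executes Line~\ref{LineStartEastWest} in that round, placing the message in its outgoing right channel; as $eastRange(p,0)$ is empty, part (a) holds vacuously and part (b) holds with tail $q_0 = p$.

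For the inductive step I would assume (b) at round $rnd+i$ and assume $eastRange(p,i+1)$ is non-black; since its prefix $eastRange(p,i)$ is then also non-black, the induction hypothesis applies, so $q_i$'s outgoing right channel holds $p$'s message and $q_{i+1}$ is non-black. In round $rnd+i+1$, $q_{i+1}$ receives $goEast(\langle column.p, \mathit{left}.p, p, right.p\rangle)$ from its left neighbor $q_i$. Because $i+1 < W$, the message has not yet wrapped around the row, so its carried identifier $p$ differs from $q_{i+1}$; hence $q_{i+1}$ takes the $id \neq p$ branch of the $goEast$ receive action, adds $\langle column.p, \mathit{left}.p, p, right.p\rangle$ to the head of $\mathit{rowLeft}.q_{i+1}$ (establishing (a) for $q_{i+1}$) and forwards the identical message to $right.q_{i+1} = q_{i+2}$, placing it in $q_{i+1}$'s outgoing right channel (establishing (b) for $i+1$). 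Composing across steps records $p$'s data in the $\mathit{rowLeft}$ of every $q_k$ with $k \le i+1$, giving part (a) for all of $eastRange(p,i)$.

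The main obstacle is the grey process, since \emph{non-black} admits grey processes and a grey $q_k$ may have completed, never completed, or completed out of synchrony its own North Phase. I would argue that its forwarding is nonetheless faithful: the $goEast$ receive action carries no guard on $northDone$, so it fires on every arriving $goEast$ and a grey $q_k$ relays $p$'s message exactly as a white process would. A second subtlety is that an out-of-sync grey process may inject its own $goEast$ so that two messages traverse the same link in one round; but since I track only $p$'s message and $q_k$ runs the receive action once per arriving message, $p$'s message is still added to $\mathit{rowLeft}$ and forwarded, so (a) and (b) are unaffected — the grey process's stray entry is precisely what \textbf{match()} reconciles later.

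Finally I would note that the west direction follows by the identical argument with the roles of $right$ and $\mathit{left}$, of $goEast$ and $goWest$, and of $\mathit{rowLeft}$ and $rowRight$ exchanged, the only cosmetic change being that entries are appended to the tail of $rowRight$ rather than inserted at the head of $\mathit{rowLeft}$.
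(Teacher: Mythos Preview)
Your proposal is correct and follows essentially the same approach as the paper: an induction on $i$ that tracks $p$'s $goEast$ message hop by hop through non-black processes, with the west direction handled symmetrically. You spell out considerably more detail than the paper's terse proof sketch---in particular your explicit handling of the base case, the observation that the $goEast$ receive action has no guard on $northDone$ so grey processes forward faithfully, and the remark that coexisting messages from an out-of-sync grey process do not interfere---but the underlying structure is identical.
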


\begin{proof} Let us examine the process actions at the completion of the North Phase. When a process $p$ receives a $goNorth$ message with its own identifier, $p$ sends a message $goEast$ carrying $\langle column.p, \mathit{left}.p, p, right.p\rangle$. 
If $\langle column.p, \mathit{left}.p, p, right.p\rangle$ is present in $\mathit{rowLeft}.p$, then $\mathit{rowLeft}$ is never updated. Each non-black process $q$ forwards $goEast$ sent by $p$, adding the information it carries to $\mathit{rowLeft}.q$. Note that, by the condition of the lemma,  all processes in $eastRange$ are non-black. The rest of the lemma for $eastRange$ is proven by induction on $i$. The proof for $westRange$ is similar.\qed
\end{proof}

A matrix of values, maintained by a process, is \emph{size and id matching to origin} if the identifiers, configuration and size of the processes in the matrix, regardless of process color, matches the torus.  A matrix is also  \emph{white-value matching} if it is size and id matching and the values of all white processes in the matrix match those of the corresponding input values of the origins. 

Let us discuss the relationship between matching to origin of the matrix and its columns. If all the columns of a matrix match to origin, the matrix itself may not necessary do: the order of the columns may be altered or some of the columns may be missing. Hence the following observation. If every white column in the matrix is size, id and value matching to origin and at least one column is size, id and white-value matching to origin, then the matrix itself is size, id and white-value matching to origin. 

\begin{lemma}\label{lemEastWestGreyRowDone}
Every white process $p$ in a grey-white row completes the East-West Phase in round  $H + 1 + W$. After completion, for every white process $p$, $matrix.p$ is size, id and white-value matching to origin.
\end{lemma}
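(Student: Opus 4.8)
The plan is to exploit the defining structural feature of a grey--white row: since all faults lie in a single column and this row meets that column at a correct (grey) process, \emph{every} process in the row is non-black. Consequently the hypotheses of Lemma~\ref{lemEastWestWhiteOK} hold for every range in the row, so all horizontal messages propagate faithfully around the ring, and the only source of disturbance is the lone grey process.

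First I would fix a white process $p$. By Lemma~\ref{lemNorthPhaseDone}, $p$ completes the North Phase in round $H+1$ with a correct $column.p$, and it then emits $\mathit{goEast}$ and $\mathit{goWest}$ carrying $\langle column.p,\mathit{left}.p,p,right.p\rangle$. Applying Lemma~\ref{lemEastWestWhiteOK} with $rnd=H+1$ to the non-black east and west ranges, $p$'s own message returns to it after exactly $W$ hops, i.e.\ in round $H+1+W$; this is the round in which the $id=p$ branch of the East--West receive action fires and \textbf{match()} is invoked. The same lemma, applied to every other white process in the row, guarantees that each such process's true column lands in $\mathit{rowLeft}.p$ and $\mathit{rowRight}.p$ at its correct position. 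Since no black process can forge $p$'s identifier, $p$ sees $id=p$ for the first time precisely in round $H+1+W$, so every white process completes the phase in that round, which gives the timing claim.

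The heart of the argument concerns the single grey process $g$. Being correct, $g$ sends its horizontal messages at most once (enforced by the $northDone$ flag), at some round $rnd_g$ that may differ from $H+1$ or never occur, and it always reports its true neighbor identifiers $\mathit{left}.g,g,right.g$ even when its column contents have been corrupted by the black processes in its own column. I would therefore show that, relative to the faithfully propagated white entries, $g$ can perturb the arrival sequence in only one of three ways: it is correctly placed; it is displaced by $rnd_g-(H+1)$ rounds, producing exactly one slot holding two entries together with one empty slot; or it is absent, leaving one empty slot. In each case the white entries by themselves form a correct cyclic neighbor-identifier chain with a single gap at $g$'s position, and I would verify that \textbf{consistent()} is precisely tuned to this situation: removing at most one out-of-chain entry and inserting at most one $\langle\bot,l,id,r\rangle$ placeholder restores the chain, while the chain condition forces both the removed entry and the insertion point to be unique. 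This yields $\mathit{cleft}$ from $\mathit{rowLeft}$ and, symmetrically, $\mathit{cright}$ from $\mathit{rowRight}$. Because each white process sends the identical column East and West, the white columns agree in the two reconstructions and occur in the same ring order; the grey position holds $\bot$ in both reconstructions (or $g$'s single identical report when it is correctly placed). Hence \textbf{match()} finds $|\mathit{cleft}|=|\mathit{cright}|$ with element-wise equality and returns the list of columns. Finally, every white column is size-, id- and value-matching by Lemma~\ref{lemNorthPhaseDone}, and at least one white column anchors the configuration, so by the observation preceding the lemma the output $matrix.p$ is size-, id- and white-value matching to origin.

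The main obstacle is the reconstruction argument for \textbf{consistent()} together with the agreement $\mathit{cleft}=\mathit{cright}$. The delicate point is that a single correct-but-desynchronized process can create only a constant-size, strictly local defect --- one doubled slot and one gap, or a lone gap --- and that the neighbor-identifier chain pins down the repair unambiguously, so the grey entry is either neutralized to $\bot$ or matched identically from both sides and can never contaminate a white column or the overall $H\times W$ configuration. Establishing that $g$'s possibly malformed column is confined in this way, independent of how far $rnd_g$ strays from $H+1$, is where the real work lies; the timing and propagation facts then follow routinely from Lemmas~\ref{lemNorthPhaseDone} and~\ref{lemEastWestWhiteOK}.
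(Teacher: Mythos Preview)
Your proposal is correct and follows essentially the same approach as the paper: both arguments reduce to the observation that a grey--white row is entirely non-black, invoke Lemma~\ref{lemEastWestWhiteOK} to get faithful propagation and the return of $p$'s own message in round $H+1+W$, and then perform a case analysis on the single grey process $g$ to show that \textbf{consistent()} absorbs the at-most-one displacement and that \textbf{match()} therefore succeeds with a white-value matching matrix. Your three-way split (correctly placed / displaced / absent) is a cosmetic refinement of the paper's two cases (completes North in round $H{+}1$ / otherwise), and your explicit remark that $g$, being correct, sends its true neighbor identifiers even when its column contents are corrupted is exactly the hinge the paper uses implicitly when it argues that \textbf{consistent()} can locate the unique gap.
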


\begin{proof} We prove the lemma by showing that every white column in $matrix.p$ is value matching to origin and the grey-white row itself is white-value matching to origin.

According to the operation of the algorithm, each non-black process originates at most one $goEast$. Other processes always forward it further and it is never forwarded by the originator.  That is, in a grey-white row for each process $p$, the number of entries in $\mathit{eastRange}$ is at most one for each process. In the rest of the proof we discuss the order of these entries.

Let us consider a white process $p$ in a grey-white row in round $rnd = H+1+ W - 1$. 
Let $q$ be another white process that shares a row with $p$.  In the round $rnd$, $p \in eastRange(q,W-1)$. Since $p$'s row is grey-white, $eastRange(q,W-1)$ is grey-white also. Then, according to Lemma~\ref{lemEastWestWhiteOK}, $\langle column.q, \mathit{left}.q, q, right.q \rangle$ is in $\mathit{rowLeft}.p$. According to Lemma~\ref{lemNorthPhaseDone}, the $column.q$ is size, id and value matching to origin. Put another way, the columns in $\mathit{rowLeft}.p$ for white processes in the grey-white row are value matching to origin. Similar discussion applies to $rowRight.p$. 

Let us now discuss the formation of the resultant grey-white row.
Since $p$'s row is grey-white, $eastRange(p,W-1)$ and $westRange(p,W-1)$ are grey-white. According to Lemma~\ref{lemEastWestWhiteOK}, there is a $goEast(\langle column.p, \mathit{left}.p, p, right.p \rangle)$ message in the outgoing right channel of the tail process of $eastRange(p,W-1)$.
However, the width of the torus is $W$. That is, the tail process of  $eastRange(p,W-1)$ is the West (left) neighbor of $p$ itself. This means that this message is in the incoming left channel of $p$. 
Similarly, there is a \emph{goWest} message carrying the same information in the incoming right channel of $p$. 

Let us consider the actions of $p$ when it receives either such $goEast$ or $\mathit{goWest}$. Since the original sender of this message is $p$ itself, $p$ invokes \textbf{match}() with 
$\mathit{rleft} = \langle column.p, \mathit{left}.p, p, right.p \rangle + \mathit{rowLeft}.p$ and $rright = \langle column.p, \mathit{left}.p, p, right.p \rangle + \mathit{rowRight}.p$.
Function  \textbf{match}(), in turn, invokes \textbf{consistent}() separately on $\mathit{rleft}$ and $rright$. 

Function \textbf{consistent}() processes input and then returns $\mathit{cleft}$ and $cright$ respectively. Its operation depends on when the grey process, let us denote it $g$, completes the North Phase.  There are two cases to consider: (1) $g$ completes the North Phase in round $H+1$, (2) $g$ does not complete the North Phase in this round i.e. it either completes it in a different round or not at all. Let us address the two cases individually. 

\begin{enumerate}
\item According to Lemma~\ref{lemNorthPhaseDone}, all white processes in $p$'s row complete their North face in round $H+1$. That is, all processes, including $g$ complete it in this round. 

Let us consider any two neighbor processes $u$ and $v$ that belong to this row. Suppose $v$ is the right neighbor of $u$. That is: $right.u=v$. This means that $\mathit{left}.v = u$. Let $rd(p,u) = i$. In this case $rd(p,v) = i+1$.

According to Lemma~\ref{lemEastWestWhiteOK}, $\langle column.u, \mathit{left}.u, u, right.u \rangle = \mathit{rowLeft}(i).p$ and $\langle column.v, \mathit{left}.v, v, right.v \rangle = \mathit{rowLeft}(i+1).p$. That is, the entries for $u$ and $v$ are consequent in $\mathit{rowLeft}$. This means that for every $i$ :
 $s(i) \equiv \langle ci, li, idi, ri \rangle \in \mathit{rowLeft}$ and $s(j) \equiv \langle cj, lj, idj, rj \rangle \in \mathit{rowLeft}$ it follows that  $idi = lj$ and $ri = idj$. To put another way, the left and right entries of $\mathit{rowLeft}$ match and \textbf{consistent}() returns the rows of these entries for further analysis by \textbf{match}(). Similar discussion applies to $rowRight$.

\item Let us assume that the grey process $g$ completes its North Phase in a round $H+1+j$ other than $H+1$. Note that, depending on whether $g$ completes it earlier or later, $j$ may be either positive or negative. The argument for $g$ not completing its North Phase at all is similar. Let $rd(p,i) = i \neq j$.

According to Lemma~\ref{lemEastWestWhiteOK}, $s(j) \equiv \langle c_j, l_j, id_j, r_j \rangle \in \mathit{rowLeft}(j).p$. 

Each process, including $g$, completes the North Phase at most once.
According to Lemma~\ref{lemEastWestWhiteOK}, all other processes complete this phase in round $H+1$. This means that $\mathit{rowLeft}(i).p$ is empty and it is a single empty spot in $\mathit{rowLeft}$. In this case, \textbf{consistent}() removes $s(j) \equiv \langle c_j, l_j, id_j, r_j \rangle$ and inserts $s(i) \equiv \langle \bot, l_i=id_{i-1}, id_i=l_{i-1}=r_{i+1}, r_i = id_{i+1} \rangle$.
That is, \textbf{consistent}() inserts the entry with an empty column in the same position in both $\mathit{cleft}$ and $cright$.
\end{enumerate}

To summarize, in both cases, \textbf{consistent}() returns entries of $\mathit{goEast}$ and $\mathit{goWest}$, except for possibly, the second case where the out-of-place entry of $g$ is replaced by a properly placed entry. The entries are stored in $\mathit{cleft}$ and $cright$. 

Let us discuss the order of these entries. Again, according to Lemma~\ref{lemEastWestWhiteOK}, the entries are in the order of the processes in this grey-white row. In round $H+1+W-1$, every white process is entered in $\mathit{cleft}$. That is, $\mathit{cleft}$ is size and id matching to origin. 

Let us discuss further operation of \textbf{match}(). It compares corresponding entries $\mathit{cleft}$ and $cright$ for equality. However, according to Lemma~\ref{lemEastWestWhiteOK}, these entries match for white processes and, by operation of \textbf{consistent} for the $g$, they also match for the grey process.

In this case \textbf{match}() returns the columns of $\mathit{cleft}$.  The columns are assigned to $matrix.p$  which ends the East-West phase. This means that this matrix is white-value matching to origin.\qed
\end{proof}

\begin{lemma}\em\label{lemEastWestBlackRowDone}
{\em A white process $p$ in a black-white row completes the East-West Phase only in the round $H + 1 + W$.  If $p$ completes this phase, $matrix.p$ is size, id and white-value matching to origin.}
\end{lemma}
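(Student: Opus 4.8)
The plan is to exploit the paper's global fault restriction. Since all faults lie in a single column, a black-white row contains exactly one black process $b$ and $W-1$ white processes. Deleting $b$ from the row leaves a single arc of white processes along which, by Lemma~\ref{lemEastWestWhiteOK}, genuine column data propagates on schedule in both the East and the West direction. The argument then splits into the two assertions of the lemma: a timing claim (if $p$ completes, it does so in round $H+1+W$) and a correctness claim (the matrix it outputs is white-value matching to origin).

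I would first fix the completion condition: process $p$ completes the East-West Phase exactly when it receives a $goEast$ or $goWest$ message carrying $id = p$ and the ensuing call \textbf{match}($\langle column.p, \mathit{left}.p, p, right.p\rangle + \mathit{rowLeft}.p,\ \langle column.p, \mathit{left}.p, p, right.p\rangle + rowRight.p$) returns a non-empty matrix, i.e. \textbf{consistent}() yields non-empty, mutually equal sequences $\mathit{cleft}$ and $\mathit{cright}$. Both sequences are prepended with $p$'s genuine entry, whose neighbor fields $\mathit{left}.p, right.p$ are truthful. The key structural fact I would prove is that any sequence accepted by \textbf{consistent}() that begins with $p$'s genuine entry must, after its single allowed deletion and single allowed $\bot$-insertion, form a cyclic chain whose consecutive left/right identifiers agree. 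Because every white process reports its true neighbors and there is only one black process in the row, the sole slack afforded by \textbf{consistent}() (one deletion and one insertion) can absorb at most the single anomaly introduced by $b$; consequently the only cyclic chain through $p$ compatible with the true neighbor relations is the full row of length $W$. No shorter, short-circuited cycle can close, and a longer one would require a second anomaly.

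For the lower bound I would observe that a full-length consistent cycle requires the entry for $right.p$ (respectively $\mathit{left}.p$) to be present in $\mathit{rowLeft}$ (respectively $rowRight$). By Lemma~\ref{lemEastWestWhiteOK} applied to the white arc, these entries travel the long way around and, together with the remaining genuine white entries, are not all present at $p$ before round $H+1+W$; and $b$, being a single process, cannot accelerate the arrival of every genuine white entry. Hence for every round $r < H+1+W$ at least one of $\mathit{cleft}, \mathit{cright}$ is necessarily incomplete, \textbf{consistent}() returns $\langle\rangle$, \textbf{match}() fails, and $p$ cannot complete early. For the ``not later'' direction I would argue that by round $H+1+W$ all genuine white entries have already passed $p$, so the white content of $\mathit{rowLeft}$ and $rowRight$ is settled; any later message carrying $id=p$ is a fabrication of $b$, and incorporating new entries past this point can only introduce a second anomaly, which \textbf{consistent}() rejects. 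Thus completion is confined to round $H+1+W$.

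For the correctness claim I would use that when \textbf{match}() succeeds, $\mathit{cleft} = \mathit{cright}$ is a full-length consistent cycle. For every white process $q$ in the row, its entry arrived genuinely from both directions, so by Lemmas~\ref{lemNorthPhaseDone} and~\ref{lemEastWestWhiteOK} the corresponding column is value matching to origin; the positions and identifiers of all entries, including that of $b$, are pinned down by the truthful neighbor fields of their white neighbors, so the reconstructed row is size and id matching to origin. Invoking the matrix/column observation stated before Lemma~\ref{lemEastWestGreyRowDone}, it follows that $matrix.p$ is size, id, and white-value matching to origin. I expect the main obstacle to be precisely the structural lemma about \textbf{consistent}(): showing rigorously that, with exactly one black process, the single-deletion/single-insertion repair is simultaneously \emph{sufficient} to recover the genuine cycle and \emph{too weak} to validate any spurious cycle. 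This will require a careful case analysis on the position of $b$ relative to $p$ and on whether $b$ withholds, delays, or forges its entry as seen from each direction.
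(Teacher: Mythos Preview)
Your correctness argument contains a gap at the crucial step. You assert that ``for every white process $q$ in the row, its entry arrived genuinely from both directions,'' and invoke Lemma~\ref{lemEastWestWhiteOK} for this. But Lemma~\ref{lemEastWestWhiteOK} applies only to non-black ranges, and for any white $q$ exactly one of the two row-paths from $q$ to $p$ passes through the single black process $b$. Along that direction $b$ may drop, delay, or replace $q$'s column payload without disturbing \textbf{consistent}(), since \textbf{consistent}() inspects only the $l,id,r$ chain and never the column values. Your structural argument about the unique consistent cycle pins down the identifiers and positions, which yields size and id matching; it does not by itself force the column \emph{values} to be correct, and that is exactly where ``white-value matching'' lives.

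The paper closes the gap by using the equality $\mathit{cleft}=\mathit{cright}$ in a different way than you do. Writing the row as $\cdots u, x, v \cdots p \cdots$ with $x$ black and $u,v$ its white neighbors, the entries in $\mathit{cleft}$ for the white arc from $p$ westward to $v$ are genuine (that arc avoids $x$, so Lemma~\ref{lemEastWestWhiteOK} applies), and symmetrically the entries in $\mathit{cright}$ for the arc from $p$ eastward to $u$ are genuine. Now the \emph{entrywise} equality that \textbf{match}() enforces transfers column correctness across $x$: each genuine column on one side forces the matching entry on the other side to carry the same, hence genuine, column. The two white arcs together cover every white $q$, and the single remaining slot for $x$ is pinned by the truthful neighbor fields of $u$ and $v$. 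You already have $\mathit{cleft}=\mathit{cright}$ on the table but use it only to argue full length; the paper uses it to propagate column genuineness through the black position, and that is the missing idea.

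On timing: you devote half the sketch to the ``only in round $H+1+W$'' clause, whereas the paper's own proof is silent on the round and argues only the conditional correctness. Your lower-bound direction is plausible but would need the additional observation that $b$ cannot forge a missing white column early in one direction without creating a mismatch with the genuine copy arriving from the other. Your ``not later'' direction is more doubtful: $b$ can forward every white entry faithfully yet delay both of $p$'s own wrap-around messages, so that $p$ first sees $id=p$ strictly after $H+1+W$ with $\mathit{rowLeft}$ and $\mathit{rowRight}$ already complete and consistent; nothing in \textbf{consistent}() or \textbf{match}() then rejects. The downstream use in Lemma~\ref{lemSouthDone} needs only that whatever matrix a white $p$ outputs is correct, which is what the paper actually establishes.
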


\begin{proof}
Assume $p$ completes the East-West Phase. Let the topology of the row of $p$ be as follows: $\cdots u,x,v\cdots$ where $x$ the black process and $u$ and $v$ are its respective left and right neighbors. 

Let us consider the subsequence $\mathit{cleft}.p$ selected by \textbf{consistent}() for $\mathit{left}/goEast$. This subsequence starts with $p$ and contains all the white processes in the longest white $westRange$ of $p$. This means that it contains $v$. Similarly, the subsequence $cright.p$ selected by \textbf{consistent}() for $right/goWest$ also contains $p$ and the longest white $eastRange$ of $p$, meaning that it contains $u's$. 

According to Lemma~\ref{lemEastWestWhiteOK}, all the white processes entries in $eastRange$ of $\mathit{cleft}.p$ are matching to origin and all the white process entries in $westRange$ of $cright.p$ are also matching to origin.

Entry for $v$ contains $x$ as its right neighbor. Since \textbf{consistent}() selected $\mathit{cleft}.p$, it contains the entry for $x$ where $v$ is its left neighbor. $\mathit{left}.p$ may contain only a single entry for a process identifier. This means that there is only one entry for $x$ in $\mathit{cleft}.p$. Similarly, there is only one entry for $x$ in $\mathit{cright}.p$.

Function $match()$ checks if $\mathit{cleft}.p$ is identical to $cright.p$. 
Since the two are identical, the processes of $eastRange$ of $\mathit{cleft}.p$ have the corresponding entries in $cright$ and match them. The last entry in $eastRange$  of $\mathit{cleft}.p$ is $v$ whose left neighbor is $x$.
Since $\mathit{cleft}.p$ matched $cright.p$ the entry for $x$ must exist in $cright.p$ and match the entry in $\mathit{cleft}.p$. There is only one entry per identifier. Therefore, this entry in $cright.p$ should have the right neighbor of $u$. This and remaining entries in $cright.p$ belong to $westRange$ of $p$. They must match the corresponding entries in $\mathit{cleft}.p$. However, this means that white process entries in $cright.p$ and $\mathit{cleft}.p$ match origin and the entry of $x$ matches the identifier of $origin$. \qed
\end{proof}

\begin{lemma}\label{lemSouthDone}
Every white process $p$ completes the South Phase on or before round $H+1+W+H$. After completion, $matrix.p$ is size, id and white-value matching to origin.
\end{lemma}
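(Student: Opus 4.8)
The plan is to treat the South Phase as a re-synchronization step in which the white processes of the guaranteed grey-white row push a correct matrix down every fault-free column. The structural fact I would rely on throughout is that every white process lives in a fault-free column, so all of its vertical (up/down) neighbors are themselves white and hence forward $goSouth$ messages faithfully; moreover, since $goSouth$ travels only vertically, no corrupted $goSouth$ originating in the faulty column can ever reach a fault-free column. I would first record, as a common observation, that by Lemma~\ref{lemEastWestGreyRowDone} and Lemma~\ref{lemEastWestBlackRowDone} any white process that completes the East-West Phase does so exactly in round $H+1+W$ and with $matrix$ size, id and white-value matching to origin; by the East-West receive action, in the same round such a process sends $goSouth(matrix,p)$ down (Line~\ref{LineStartSouth}) and $done$ to its horizontal neighbors, so it has also completed the South Phase by round $H+1+W \le H+1+W+H$.

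Next I would establish origination. Since at least one row is fault-free, a grey-white row exists, and its intersection with each fault-free column is a white process. By Lemma~\ref{lemEastWestGreyRowDone}, each such process completes the East-West Phase in round $H+1+W$ with a correct matrix and emits $goSouth$ downward. Hence every fault-free column contains at least one white process that, in round $H+1+W$, originates a $goSouth$ message whose payload is size, id and white-value matching to origin.

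Then I would handle the white processes that do not complete the East-West Phase on their own, which by Lemma~\ref{lemEastWestBlackRowDone} can only occur in a black-white row. Fix such a process $p$ and consider its (all-white) fault-free column. The $goSouth$ emitted by the grey-white-row white process of that column is forwarded unconditionally by each intermediate white process (the guard is only $id \neq p$), so its payload is never altered. A short induction on hop distance shows that a process at vertical distance $d$ from the originator receives the message in round $(H+1+W)+d$; since the column is a ring of height $H$, the message reaches all $H-1$ other processes at distances $1$ through $H-1$ (and stops when it returns to its originator), so $p$ receives a $goSouth$ no later than round $2H+W \le H+1+W+H$. Upon receipt the guard $matrix=\langle\rangle$ holds (Line~\ref{LineFixMatrix}), so $p$ sets $matrix.p$ to the carried value, outputs it, and sends $done$ (Line~\ref{LineStartDecision}), completing the South Phase; the carried matrix is size, id and white-value matching because it was produced by a white process with a correct matrix and forwarded only by white processes.

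Combining the three cases — a white process in a grey-white row, a white process in a black-white row that completes the East-West Phase, and a white process rescued by $goSouth$ — yields that every white process completes the South Phase on or before round $H+1+W+H$ with $matrix.p$ size, id and white-value matching to origin. I expect the main obstacle to be the bookkeeping that guarantees correctness of every accepted matrix: one must argue that every $goSouth$ a white process can accept from its up-neighbor was produced by a white process whose East-West completion yields a white-value matching matrix (via Lemmas~\ref{lemEastWestGreyRowDone} and~\ref{lemEastWestBlackRowDone}), and that the all-white column, together with the purely vertical travel of $goSouth$, forbids any injected or corrupted payload from the faulty column — so that no matrix adopted in the South Phase is ever tainted.
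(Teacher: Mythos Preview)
Your approach is essentially the paper's: show that every white process either fills $matrix$ itself at round $H+1+W$ (via Lemmas~\ref{lemEastWestGreyRowDone} and~\ref{lemEastWestBlackRowDone}) or is rescued by a $goSouth$ descending its all-white column, originated by the grey-white-row process of that column. The structural observation that $goSouth$ travels only vertically, so nothing from the faulty column can reach a white process, is exactly right, and your timing bound $2H+W\le H+1+W+H$ is fine.

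There is, however, a genuine gap in your induction. You write that the grey-white originator's $goSouth$ is ``forwarded unconditionally by each intermediate white process (the guard is only $id\neq p$),'' and then induct on hop distance to conclude that the message reaches vertical distance $d$ at round $(H{+}1{+}W)+d$. This ignores halting: an intermediate white process $q$ may have already executed \textbf{stop} before the originator's $goSouth$ arrives. Concretely, if $q$ itself completed the East-West Phase at round $H{+}1{+}W$ and one of its horizontal neighbors is black (possible in a black-white row), the black neighbor can send $done$ arbitrarily early; then $q$ has both $matrix\neq\langle\rangle$ and $eastWestDone$ at round $H{+}1{+}W$ and stops immediately. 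In that case $q$ never forwards the grey-white originator's message, and your induction on ``the message'' breaks at $q$. The paper closes this gap by arguing about halting directly: a process can halt only when $matrix\neq\langle\rangle$, and the algorithm is written so that setting $matrix$ (whether via \textbf{match}() or via receipt of $goSouth$) always entails sending a $goSouth$ downward in the same round. Hence a halted intermediate process has already injected its own correct $goSouth$ into the column, and the wave continues below it with the same payload guarantees. To repair your argument, replace the induction on the travel of a \emph{specific} $goSouth$ by the statement ``the process at downward distance $d$ from the grey-white originator has $matrix\neq\langle\rangle$ (and has sent some correct $goSouth$ downward) by round $(H{+}1{+}W)+d$,'' and in the inductive step split on whether the process at distance $d$ is still running or has halted.
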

\begin{proof}
Each process $p$ maintains the input values in $matrix.p$. The matrix is initially empty and the values are recorded there in one of two ways: (a) by storing the values returned by \textbf{match}() at the completion of the East-West Phase or (b) by receiving the matrix in $gonorth$ message during the South Phase. 

Let us discuss case (a). Due to Lemmas~\ref{lemEastWestGreyRowDone} and~\ref{lemEastWestBlackRowDone}, each white process $p$ completes the East-West Phase in the same round $H + 1 + W$ and with $matrix.p$ that is size, id, and white-value matching origin. That is, the matrices recorded after the completion of the East-West Phase satisfy the lemma. 

Let us consider case (b). After the completion of the East-West Phase, each process $p$ sends the contents of $matrix.p$ down in the $goSouth$ message. In the torus, the columns are either black-grey or white. After round $H + 1 + W$, the only matrices that circulate in white columns in $goSouth$ messages are size, id, and white-value matching to origin. Moreover, according to Lemma~\ref{lemEastWestGreyRowDone}, such message is guaranteed to be sent by every white process in the grey-white row. This means that every process in the white column receives such message in $H$ rounds after it was sent by a process in the grey-white row. That is, every process receives it in $H+1+W+H$ rounds.

Once a process $p$ receives a $goSouth$ message, records the value in the $matrix.p$, forwards the message further down, starts the Decision Phase by sending $done$ messages to left and right neighbors and then, possibly, halts. Thus, the only reason for a process in a white column not to receive the matrix is if it halts. However, by the design of the algorithm, a process may not halt only when the $matrix.p$ is not empty, that is, the process receives the $goSouth$ message. That is, in case (b) each white process eventually stores a matrix of size, id and white-value matching to origin.\qed
\end{proof}

\begin{lemma}\label{lemDecisionDone}
Every white process completes the Decision Phase on or before round $H+1+W+H+1$.
\end{lemma}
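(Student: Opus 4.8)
The plan is to read the precise halting condition off the code and then show every white process meets it in time. Inspecting the three \textbf{stop} statements, a process halts exactly when it simultaneously holds a non-empty $matrix$ and has set $eastWestDone = \textbf{true}$, i.e.\ it has received at least one $done$ message. The first ingredient comes directly from Lemma~\ref{lemSouthDone}: every white process $p$ completes the South Phase, and hence both sets $matrix.p \neq \langle\rangle$ and emits $done$ to its two horizontal neighbors (Line~\ref{LineStartDecision} and the analogous send in the East-West match), on or before round $H+1+W+H$. So the only thing left to bound is when $p$ first \emph{receives} a $done$.

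The key step I would establish next is that every white process has at least one white horizontal neighbor. Since all faults lie in a single column and $p$ itself is white (hence not in that column), the left and right neighbors of $p$ occupy two distinct columns — here $W \geq 5$ comfortably guarantees they are distinct — at most one of which is the faulty column. Therefore at least one neighbor $n$ is also white. By Lemma~\ref{lemSouthDone}, $n$ likewise sends $done$ to both of its horizontal neighbors, one of which is $p$, on or before round $H+1+W+H$. Consequently $p$ receives a $done$ on or before round $H+1+W+H+1$. Note that it is irrelevant whether $n$ subsequently halts: once $n$ has emitted $done$, the message is in transit, so early termination of neighbors does not affect the argument.

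To finish, I would combine the two bounds while accounting for the order in which the events reach $p$. If the neighbor's $done$ arrives only after $p$ has already stored its matrix, then the receive action fires the halt at Line~\ref{LineStop}, which occurs by round $H+1+W+H+1$. If instead the $done$ arrives first, or in the same round, then $eastWestDone.p$ is already \textbf{true} when $p$ sets its matrix, and $p$ halts through the $eastWestDone$ guard in the East-West or South action, which happens no later than round $H+1+W+H$. In every case $p$ completes the Decision Phase by round $H+1+W+H+1$, as claimed.

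I expect the only genuine subtlety to be this white-neighbor step together with the case analysis on event ordering: one must argue that the single-faulty-column structure forces a white neighbor, and then verify that, regardless of whether the $done$ or the matrix is established first, one of the three \textbf{stop} guards is necessarily triggered within the stated round. The timing is otherwise an immediate corollary of Lemma~\ref{lemSouthDone}, so no new propagation analysis is needed.
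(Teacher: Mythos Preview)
Your proposal is correct and follows essentially the same route as the paper's own proof: invoke Lemma~\ref{lemSouthDone} to get $matrix \neq \langle\rangle$ and the emission of $done$ by round $H+1+W+H$, observe that the single-faulty-column assumption gives every white process at least one white horizontal neighbor, and conclude that a $done$ arrives one round later. Your code-level reading of the halting condition and the explicit case split on event ordering are in fact more careful than the paper's argument, which glosses over these details.
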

\begin{proof}
After the completion of the South Phase, each white process ends a $goSouth$ message South and $done$ messages to East and West. The process completes the Decision Phase when it receives a $goSouth$ phase from its North neighbor and at least one $done$ message from its horizontal neighbors. 

A white process has a white North neighbor and at least one white horizontal neighbor. According to Lemma~\ref{lemSouthDone}, every white process completes the South Phase. This means that each white process receives a $goSouth$ message and at least one $done$ message. This allows each white process to complete the Decision Phase and halt. According to Lemma~\ref{lemSouthDone}, it takes $H+1+W+H$ rounds to complete the South Phase. \qed
\end{proof}

To recap, Lemma~\ref{lemSouthDone} states that the output matrix for each white process $p$ is size, id and white-value matching to origin. That is, for every pair of white processes $p$ and $q$, 
$val.q.p = initVal.q$. Lemma~\ref{lemDecisionDone} that every white process eventually halts. These are the properties of the \emph{Weak Synchronous All-to-All Broadcast Problem}. Moreover, according to Lemma~\ref{lemDecisionDone}, this is accomplished in at most $2H+2+W$ rounds. We summarize the results in Theorem~\ref{thrmBAT}.
\end{document}